\documentclass[a4paper,UKenglish,cleveref, autoref, thm-restate]{lipics-v2019}


\usepackage{graphicx}
\usepackage{subcaption}
\usepackage{caption}
\usepackage{amstext,amssymb,amsthm,amsfonts,latexsym,amsmath}
\usepackage{mathtools}
\usepackage{algorithm}
\usepackage[noend]{algpseudocode}
\algnewcommand\algorithmicforeach{\textbf{for each}}
\algdef{S}[FOR]{ForEach}[1]{\algorithmicforeach\ #1\ \algorithmicdo}




\newenvironment{customthm}[1]{%
  \manualtheoreminner
}{\endmanualtheoreminner}

\newenvironment{customcorollary}[1]{%
  \manualcorollaryinner
}{\endmanualcorollaryinner}

\newenvironment{customlemma}[1]{%
  \manuallemmainner
}{\endmanuallemmainner}

\DeclarePairedDelimiter\abs{\lvert}{\rvert}%
\DeclarePairedDelimiter\norm{\lVert}{\rVert}%

\makeatletter
\let\oldabs\abs
\def\abs{\@ifstar{\oldabs}{\oldabs*}}
\let\oldnorm\norm
\def\norm{\@ifstar{\oldnorm}{\oldnorm*}}
\makeatother

\makeatletter
\DeclareRobustCommand{\iscircle}{\mathord{\mathpalette\is@circle\relax}}
\newcommand\is@circle[2]{%
  \begingroup
  \sbox\z@{\raisebox{\depth}{$\m@th#1\bigcirc$}}%
  \sbox\tw@{$#1\square$}%
  \resizebox{!}{\ht\tw@}{\usebox{\z@}}%
  \endgroup
}
\makeatother

\newtheorem{observation}[theorem]{Observation}

\bibliographystyle{plainurl}

\title{Approximating the $\lambda$-low-density value} 

\titlerunning{Approximating the $\lambda$-low-density value} 

\author{Joachim Gudmundsson}{The University of Sydney}{joachim.gudmundsson@sydney.edu.au}{https://orcid.org/0000-0002-6778-7990
}{Funded by the Australian Government through the Australian Research Council DP180102870.}

\author{Zijin Huang}{The University of Sydney}{zijin.huang@sydney.edu.au}{https://orcid.org/0000-0003-3417-5303}{}

\author{Sampson Wong}{The University of Sydney}{swon7907@sydney.edu.au}{}{}

\authorrunning{J. Gudmundsson, Z. Huang and S. Wong} 

\Copyright{Joachim Gudmundsson, Zijin Huang and Sampson Wong} 

\ccsdesc[100]{Theory of computation~Design and analysis of algorithms} 

\keywords{realistic input models, packedness, low density, computational geometry
} 






\nolinenumbers 


\EventEditors{John Q. Open and Joan R. Access}
\EventNoEds{2}
\EventLongTitle{42nd Conference on Very Important Topics (CVIT 2016)}
\EventShortTitle{CVIT 2016}
\EventAcronym{CVIT}
\EventYear{2016}
\EventDate{December 24--27, 2016}
\EventLocation{Little Whinging, United Kingdom}
\EventLogo{}
\SeriesVolume{42}
\ArticleNo{23}

\begin{document}
\maketitle
\begin{abstract}
The use of realistic input models has gained popularity in the theory community. Assuming a realistic input model often precludes complicated hypothetical inputs, and the analysis yields bounds that better reflect the behaviour of algorithms in practice. 

One of the most popular models for polygonal curves and line segments is $\lambda$-low-density. To select the most efficient algorithm for a certain input, one often needs to compute the $\lambda$-low-density value, or at least an approximate value. In this paper, we show that given a set of $n$ line segments in $\mathbb{R}^2$ one can compute a $3$-approximation of the $\lambda$-low density value in $O(n \log n + \lambda n)$ time. We also show how to maintain a $3$-approximation of the $\lambda$-low density value while allowing insertions of new segments in $O(\log n + \lambda^2)$ amortized time per update. 

Finally, we argue that many real-world data sets have a small $\lambda$-low density value, warranting the recent development of specialised algorithms. This is done by computing approximate $\lambda$-low density values for $12$ real-world data sets.  
\end{abstract}

\section{Introduction}
\label{sec:Introduction}
When designing algorithms, researchers often use worst-case analysis to show the theoretical upper bound of their algorithms. The drawback of this approach is that these worst-case cases are often convoluted, and they are unlikely to occur in practice. Realistic input models attempt to rectify this issue by placing realistic constraints on the input, resulting in an analysis that better reflects the real-world performance.

There have been many proposed realistic models for geometric data including fatness,~low density, uncluttered, simple cover, and packedness~\cite{deBerg2002,driemel2012,stappen1994}, to name a few. These models place realistic constraints on the input. We give three examples of difficult computational tasks that become much more tractable under realistic input constraints.

In the case of calculating the Fréchet distance, Bringmann~\cite{Bringmann2014} showed a lower bound of $\Omega(n^{2 - \epsilon})$ for computing a $(2 - \epsilon)$-approximation of the Fréchet distance between two polygonal curves assuming the Strong Exponential Time Hypothesis (SETH). SETH asserts that $k$-SAT cannot be solved in $O((2 - \epsilon)^n)$ time for any constant $\epsilon > 0$, and $k$-SAT is an NP-complete problem. SETH is a popular conjecture, and it implies that there is no strongly subquadratic time algorithm to calculate the Fréchet distance. However, Driemel et al.~\cite{driemel2012} showed a near-linear time $(1+\epsilon)$-approximation algorithm in the case when the (polygonal) curves are $c$-packed or $\lambda$-low density. A curve $\pi$ is $c$-packed if the total length of $\pi$ inside any ball of radius $r$ is at most $c \cdot r$, and a set of objects is $\lambda$-low-density if for any ball of any size, the number of objects whose size is greater than the radius of the ball that intersect the ball is at most $\lambda$.    

In another instance, Van der Stappen~\cite{stappen1994} introduced $\lambda$-low-density as a realistic assumption for obstacles in robotic navigation. Real-life geometric structures, such as floor plans~\cite{schwarzkpf1996} and street maps~\cite{CDGNW2011} are often low-density environments, and many robotic navigation problems can be solved more efficient if the environment is assumed to be low-density~\cite{VANDERSTAPPEN1993}. 

Chen et al.~\cite{CDGNW2011} combined low density and packedness to produce a faster map-matching algorithm. Given a polygonal curve $\pi$ and a graph $G$ with edges embedded as straight line segments, they considered the problem of matching $\pi$ to a path $P$ in $G$ that minimises the Fréchet distance between $\pi$ and $P$. Assuming that $G$ is $\lambda$-low-density, and $\pi$ is $c$-packed, their algorithm runs in near-linear time. Furthermore, they verified that the maps of San Francisco, Athens, Berlin, and San Antonio are all $\lambda$-low-density where $\lambda$ is a small constant. 

Although we can produce fast algorithms by assuming that the data set fits in one of the realistic input models, how can we check if these assumptions are reasonable? Researchers either assume that the majority of a particular type of data set fits into one of the models (floor plans~\cite{schwarzkpf1996}, and obstacles~\cite{VANDERSTAPPEN1993}) or use slow methods to verify existing data sets (English handwriting~\cite{driemel2012}, and city maps~\cite{CDGNW2011}). Such methods may not be suitable given the proliferation of geometric data that is nowadays generated, and often in a dynamic setting. 

Therefore, in this paper, we study the problem of deciding the $\lambda$-low-density of a given set of segments in the Euclidean plane. 

To the best of our knowledge, the algorithm by De Berg et al.~\cite{deBerg2002} is the current state-of-the-art for deciding the density value of a set of objects, and takes $O(n \log^3n + \lambda n \log^2n + \lambda^2 n)$ time~\cite{deBerg2002}. However, their algorithm can only handle the restricted case when segments do not intersect. Furthermore, no data structure is known that allows for dynamic update on the $\lambda$-low-density value.

The main result of this paper is: Given a set of $n$ line segments in $\mathbb{R}^2$ one can compute a $3$-approximation of the $\lambda$-low density value in $O(n \log n + \lambda n)$ time. We also show how to maintain a $3$-approximation of the $\lambda$-low density value while allowing insertions of new segments in $O(\log n + \lambda^2)$ amortized time per update.


To investigate the usefulness of the $\lambda$-low density model for trajectories we implemented a $4$-approximation algorithm to estimate the density values of twelve real-world data sets. The median density values in eleven of the data sets are less than $51$. The median $\lambda / n$ ratios of six data sets are less than $0.04$, where $n$ is the size of the curve. Although there are only twelve data sets and our values are estimates, the results indicate that low density is a practical, realistic model for many real-world data sets.

Due to the space constraint, some of the proofs can be found in the appendix.

\section{Approximation algorithms} \label{cha:algorithms} 
In this section, we show a simple quadratic time 25-approximation algorithm to compute the density value of a given set of segments, which will then be improved upon in later sections. We start by formally defining low density for segments (see also Figure~\ref{fig:low_density_segments}).

\begin{figure}[!htb]
    \centering
    \includegraphics[scale=0.9]{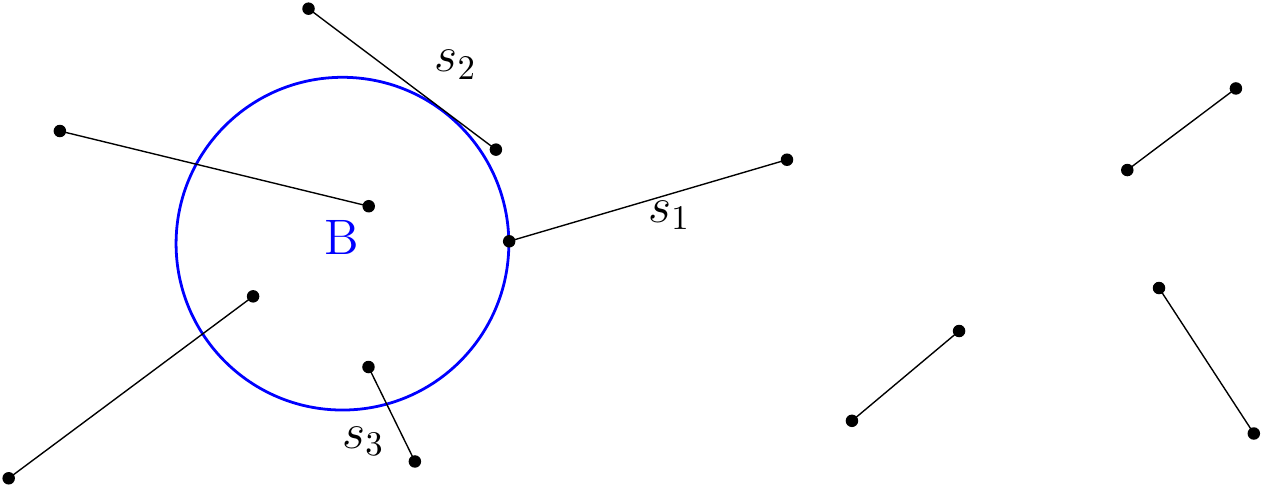}
    \caption{The set of segments in the figure is $4$-low-density. Segments $s_1$ and $s_2$ touches $B$, while $s_3$ does not count as intersecting $B$ since $|s_3| < \text{radius}(B)$. The ball $B$ is an optimal ball as it intersects four segments, and no ball can intersect five or more segments.}
    \label{fig:low_density_segments}
\end{figure}

\begin{definition}
\label{def:low_density}
Let $S := \{s_1, ..., s_n\}$ be a set of segments, and let $\lambda \geq 0$ be a parameter. We say $S$ is $\lambda$-\textit{low-density} if for any ball $B$, the number of segments $s_i \in S$ with $|s_i| \geq$ \text{radius}($B$) that intersect $B$ is at most $\lambda$.
\end{definition}

To simplify the description of the algorithms and the proofs we say a segment $s$ \emph{intersects} a ball $B$ if and only if the length of $|s|\geq \text{radius}($B$)$ and $s\cap B \neq \emptyset$, and $s$ is said to \emph{touch} a ball $B$ if $s$ intersects $B$ at a single point. Finally, a ball $B$ is optimal with respect to a set $S$ of segments if $S$ is $\lambda$-low-density and $B$ intersects exactly $\lambda$ segments of $S$.


\subsection{Basic properties and a first algorithm} \label{ssec:properties}
In this section we will prove some basic properties that will lead us to a simple approximation algorithm. 

\begin{lemma} \label{lemma:same_size}
Let a set $S$ of segments in $\mathbb{R}^2$ be $\lambda$-low-density. There exists an optimal ball $B$ and a segment $s$ intersecting $B$ such that $|s| = \text{radius}(B)$, and $s$ is the shortest segment that intersect $B$.
\end{lemma}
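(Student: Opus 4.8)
The plan is to start from an arbitrary optimal ball $B_0$ (one exists since $S$ is $\lambda$-low-density, and among all balls that intersect exactly $\lambda$ segments we can always realize this value) and then shrink and/or translate it until the desired extremal configuration is reached, all the while keeping it optimal. Let $s$ be a shortest segment among those intersecting $B_0$; by definition of ``intersects'', $|s| \geq \operatorname{radius}(B_0)$. If equality already holds, we are done, so assume $|s| > \operatorname{radius}(B_0)$. I would then continuously grow the radius of $B_0$, keeping the center fixed, and argue that the set of segments that intersect the ball can only lose members, never gain them, as the radius increases: a segment $s_i$ intersecting the enlarged ball must satisfy $|s_i| \geq \operatorname{radius}$, which is a strictly harder condition, and while the geometric intersection $s_i \cap B$ can only grow, the length threshold is what matters. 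Hence the ball stays $\lambda$-low-density, and we just need to check it still meets $\lambda$ segments.

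The key step is therefore to show we can grow the radius up to the point $\operatorname{radius}(B) = |s|$ without dropping below $\lambda$ intersected segments. Two things can go wrong as the radius increases: (i) the shortest segment $s$ could stop touching the ball because its length falls below the radius — but this happens exactly when $\operatorname{radius}(B) = |s|$, which is precisely the stopping point we want; or a longer segment $s_i$ with $\operatorname{radius}(B_0) < |s_i| < |s|$ could similarly drop out when the radius passes $|s_i|$, but $s$ was chosen as the shortest, so no such $s_i$ exists among the segments intersecting $B_0$; (ii) the geometric intersection $s_i \cap B$ — but enlarging a ball about a fixed center only enlarges $s_i \cap B$, so no segment is lost geometrically. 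So as long as we stop at $\operatorname{radius}(B) = |s|$, every segment that intersected $B_0$ still intersects $B$, giving at least $\lambda$; combined with $\lambda$-low-density this is exactly $\lambda$, so $B$ is optimal. Finally, $s$ now satisfies $|s| = \operatorname{radius}(B)$, and since it was the shortest segment intersecting $B_0$ and no new segments were added, it is the shortest segment intersecting $B$.

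The main obstacle I anticipate is handling the degenerate/boundary behaviour carefully: when the radius reaches $|s|$, the segment $s$ may end up merely touching $B$ (intersecting at a single point) or the intersection point may be an endpoint of $s$, and one must confirm that ``touching'' still counts as ``intersecting'' under the paper's convention (it does, by the definition given right after Definition~\ref{def:low_density}), so $s$ is not lost at the critical radius. A secondary subtlety is the case where several segments tie for shortest, or where growing the radius to $|s|$ would force the ball to swallow new segments — but as argued above the length constraint tightens, so no segment can newly qualify. Making the continuity/monotonicity argument rigorous (e.g. by simply comparing the segment sets for radius $r_0 = \operatorname{radius}(B_0)$ and $r_1 = |s|$ directly, rather than via a continuous sweep) sidesteps any measure-theoretic fuss and is probably the cleanest way to write it up.
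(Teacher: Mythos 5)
Your proposal is correct and is essentially the paper's own argument: keep the center of an optimal ball fixed, grow its radius to the length of the shortest intersecting segment, and observe that all previously intersecting segments still qualify, so the ball remains optimal. The only slight over-claim is that no segment can ``newly qualify'' for the enlarged ball (a long segment could newly meet it geometrically), but this is harmless since the $\lambda$-low-density bound caps the count at $\lambda$ and any newcomer has length at least $|s|$, so $s$ remains shortest.
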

\begin{proof}
Let $B$ be an optimal ball that intersect with $\lambda$ segments. Let $s$ be the shortest segment that intersects $B$. If $|s| = \text{radius}(B)$, we are done. If $\abs{s} > \text{radius}(B)$, then let $B'$ be a ball with the same center as $B$, and let $\text{radius}(B') = |s|$. Then $B'$ still intersects $\lambda$ segments. If $\abs{s} < \text{radius}(B)$, then $s$ does not intersect $B$ which contradicts the assumption.
\end{proof}

Given Lemma~\ref{lemma:same_size}, if $s$ is the shortest segment that intersect an optimal ball $B$, then it immediately follows that $B$ must lie entirely within a stadium-shaped region $P_s$ around $s$. That is, given a segment $s$, let $P_s$ be the union of all points within distance $2\abs{s}$ of a point on $s$, as shown in Figure~\ref{fig:stadium}(a).

\begin{figure}[!htb]
    \centering
    \includegraphics[width=0.9\textwidth]{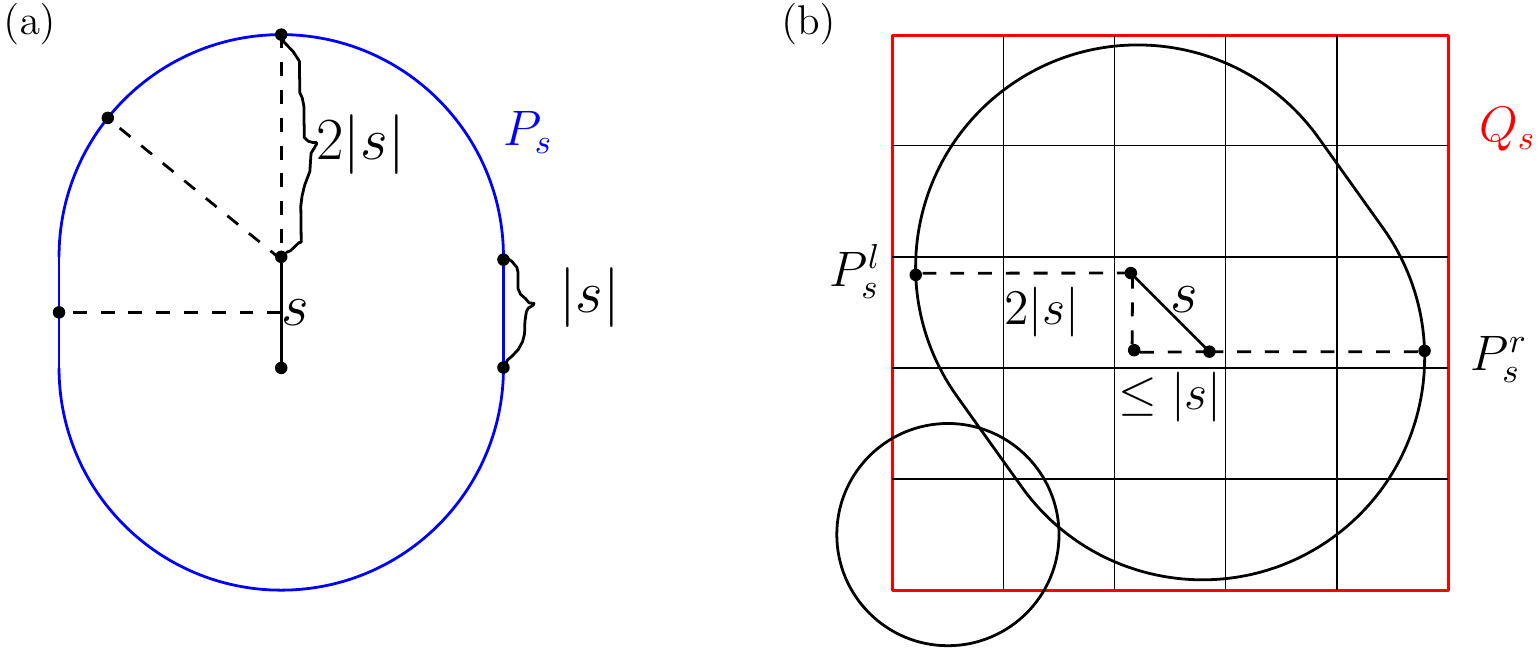}
    \caption{(a) A segment $s$ and its stadium $P_s$. (b) $Q_s$ covers $P_s$.}
    \label{fig:stadium}
\end{figure}

Next, we show that $P_s$ can be covered by a $5\abs{s}$ by $5\abs{s}$ axis-aligned square $Q_s$ centered at the middle point of $s$, and $Q_s$ can be partitioned into 25 smaller squares of side length~$\abs{s}$. Each of these smaller squares can be covered by a ball of radius $\abs{s}$. As a result, if there exists an optimal ball $B$ intersecting $s$ and having radius $|s|$ then there exists a small ball that intersects at least $\lambda / 25$ segments.

The above arguments suggests a natural $O(n^2)$ algorithm. By generating $25$ balls per segment and calculating the maximum number of segments that intersect any ball, we can compute a $25$-approximation of the density value of a set of segments. The approach is outlined in Algorithm~\ref{alg:25-approximation}. 

\begin{algorithm}
\caption{A 25-approximation algorithm}\label{alg:25-approximation}
\begin{algorithmic}[1]
\Require $S$: a set of segments in $\mathbb{R}^2$
\State $\lambda \gets 0$
\ForEach {$s \in S$}
    \State $S' \gets$ the set of segments that intersect $Q_s$
    \State $balls \gets 25$ balls that covers $Q_s$
    \ForEach {$\iscircle$ in $balls$}
        \State $\lambda \gets$ max($\lambda$, the number of segments in $S'$ that intersect $\iscircle$)
    \EndFor
\EndFor
\State \Return $\lambda$
\end{algorithmic}
\end{algorithm}

From the above arguments, we can also bound the number of segments that intersect~$Q_s$ which will be used in Section~\ref{sec:quadtree} to speed up the algorithm. 

\begin{corollary}
\label{lemma:o_lambda}
Let $S$ be a set of $\lambda$-low-density segments. For any $s \in S$, the number of segments $s' \in S$ with $\abs{s'} \geq \abs{s}$ that intersect $Q_s$ is $O(\lambda)$.
\end{corollary}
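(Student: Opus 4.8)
The plan is to reduce the claim to the low-density property of $S$ by covering $Q_s$ with a constant number of balls, each of radius at most $\abs{s}$. Recall that $Q_s$ is a $5\abs{s} \times 5\abs{s}$ axis-aligned square, which we partition into $25$ congruent subsquares of side length $\abs{s}$. Each such subsquare has diameter $\sqrt{2}\,\abs{s} < 2\abs{s}$, so it is contained in a ball of radius $\abs{s}$ (in fact radius $\tfrac{\sqrt 2}{2}\abs{s}$ suffices). Call these $25$ balls $B_1, \dots, B_{25}$; together they cover $Q_s$.

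Now consider any segment $s' \in S$ with $\abs{s'} \geq \abs{s}$ that intersects $Q_s$. Since the $B_j$ cover $Q_s$, the set $s' \cap Q_s$ is nonempty and hence $s'$ meets at least one $B_j$, say $s' \cap B_j \neq \emptyset$. Because $\abs{s'} \geq \abs{s} = \operatorname{radius}(B_j)$, the segment $s'$ \emph{intersects} $B_j$ in the sense of Definition~\ref{def:low_density}. By the $\lambda$-low-density assumption applied to the ball $B_j$, at most $\lambda$ segments of $S$ intersect $B_j$ in this sense; in particular at most $\lambda$ segments $s'$ with $\abs{s'} \geq \abs{s}$ can be charged to $B_j$.

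Summing over the $25$ balls, the number of segments $s' \in S$ with $\abs{s'} \geq \abs{s}$ that intersect $Q_s$ is at most $25\lambda = O(\lambda)$, which is the claimed bound.

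I do not anticipate any real obstacle here: the only point requiring a little care is the direction of the length comparison — one must use the hypothesis $\abs{s'} \geq \abs{s}$ precisely so that the covering balls (which have radius exactly $\abs{s}$, matching the subsquare side length) are ``small enough'' for $s'$ to count against them under Definition~\ref{def:low_density}. A segment with $\abs{s'} < \abs{s}$ need not contribute, which is why the statement is restricted to the longer segments, and this restriction is exactly what the low-density definition needs.
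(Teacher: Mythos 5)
Your argument is correct and is exactly the one the paper intends: cover $Q_s$ by the $25$ subsquares of side length $\abs{s}$, enclose each in a ball of radius $\abs{s}$, and apply the $\lambda$-low-density definition to each ball (the hypothesis $\abs{s'} \geq \abs{s}$ being precisely what makes each such segment count against a radius-$\abs{s}$ ball), giving the bound $25\lambda$. The paper leaves this as an immediate consequence of its covering discussion, and your write-up fills in the same steps with no gaps.
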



\subsection{Improving the approximation factor to three}
\label{sec:34-approximation}
In the previous section we gave a simple approximation algorithm with a rough approximation factor for ease of explanation. The main idea of the $25$-approximation algorithm is to cover the stadium $P_s$ with a square grid containing $25$ small squares, and then cover each square with a ball. This guarantees that there exists one ball that intersects at least $\lambda/25$ segments.

In this section we will instead use a sheared triangular grid $G_T$ of equilateral triangles having side length $|s|/9$. The sheared grid will have side length $10|s| \times 10|s|$ and will be centered at the middle point of a segment $s$. Note that the sheared grid covers the stadium~$P_s$, as illustrated in Fig.~\ref{fig:rhombus}.

\begin{figure}[!htb]
    \centering
    \includegraphics[width=0.7\textwidth]{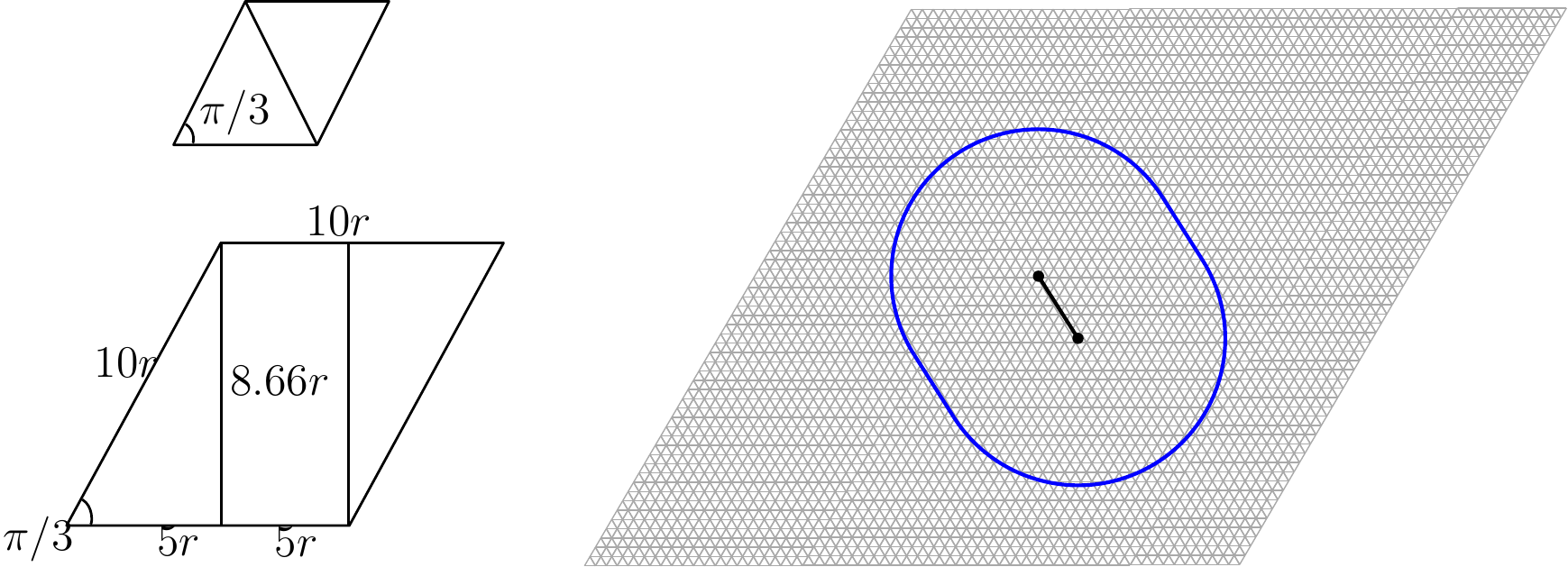}
    \caption{Illustrating the sheared $10|s| \times 10|s|$ triangular grid $G_T$ of equilateral triangles having side length $|s|/9$.}
    \label{fig:rhombus}
\end{figure}


We will argue that for every triangle $T$ in the grid one can construct three balls of radius $|s|$ such that any optimal ball of radius $|s|$ and center within $T$, can be covered by three three balls. This will reduce the approximation factor to $3$, while only increasing the running time by a constant factor (due to increasing the number of balls we need to check). We formally define the set of balls that cover $Q_s$. For an arbitrary triangle $T$ in $G_T$, without loss of generality, assume that the bottom side of $T$ aligns with the $x$-axis, and let $(x, y)$ be $T$'s center. We place three balls of radius $\abs{s}$ centered at $(x - (\frac{\sqrt{3}}{4} + \frac{1}{36})|s|, y - (\frac{1}{4} + \frac{1}{36\sqrt{3}})|s|)$, $(x, y - (\frac{1}{2} + \frac{1}{18 \sqrt{3}})|s|)$, and $(x + (\frac{\sqrt{3}}{4} + \frac{1}{36})|s|, y + (\frac{1}{4} + \frac{1}{36\sqrt{3}})|s|)$, respectively. We denote the set of balls constructed from the triangular grid for $s$ with $\mathcal{B}_s$.

We first describe, in Observation~$\ref{obs:minkowski_sum}$, the region where an optimal ball can exist. Let $P, Q \subseteq \mathbb{R}^d$. We denote Minkowski sum~\cite{Minkowski_sum} by $P \oplus Q = \{p + q \mid p \in P, q \in Q\}$. Note the following observation:

\begin{observation}
\label{obs:minkowski_sum}
Let $Q \subseteq \mathbb{R}^d$, $Q \neq \emptyset$, let $B_0$ be a ball of radius $|s|$ centered at the origin, and let $B$ be a ball of radius $|s|$. If $\text{center}(B) \in Q$, then $B \subseteq Q \oplus B_0$. 
\end{observation}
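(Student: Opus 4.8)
The plan is to prove the containment $B \subseteq Q \oplus B_0$ by a direct pointwise argument that merely unfolds the definition of the Minkowski sum. First I would fix an arbitrary point $x \in B$ and set $c := \text{center}(B)$. By hypothesis $c \in Q$, and because $B$ is a ball of radius $|s|$ we have $\|x - c\| \leq |s|$; equivalently, the vector $v := x - c$ lies in $B_0$, the ball of radius $|s|$ centered at the origin. Then $x = c + v$ with $c \in Q$ and $v \in B_0$, which is precisely the defining condition for $x \in Q \oplus B_0$. Since $x \in B$ was arbitrary, this yields $B \subseteq Q \oplus B_0$.

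There is essentially no obstacle here: the statement is a one-line consequence of the definitions, and the argument works verbatim in any $\mathbb{R}^d$ and for any translation-equivariant notion of ``ball of radius $|s|$''. The non-emptiness assumption $Q \neq \emptyset$ plays no role in the inclusion itself; it is stated only so that $Q \oplus B_0$ is guaranteed non-empty, which matters when the observation is later applied with $Q$ a concrete region. The one point worth stating explicitly is that we use $\text{center}(B) \in Q$, not merely $B \cap Q \neq \emptyset$ --- the latter would be far too weak to force $B \subseteq Q \oplus B_0$.

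I expect this observation to be used right afterwards with $Q$ taken to be a single triangle $T$ of the grid $G_T$ (or a slightly enlarged copy of it): together with Lemma~\ref{lemma:same_size}, which lets us assume the optimal ball has radius exactly $|s|$ where $s$ is the shortest segment it intersects, it confines such an optimal ball to $T \oplus B_0$. The remaining work --- which is not part of this observation --- is the geometric verification that the three radius-$|s|$ balls of $\mathcal{B}_s$ associated with $T$, placed at the three prescribed offsets from $T$'s center, together cover $T \oplus B_0$; that computation is where the specific constants (the $|s|/9$ side length, the $10|s| \times 10|s|$ grid, and the three offset vectors) get pinned down.
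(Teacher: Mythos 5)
Your proof is correct and is exactly the one-line unfolding of the Minkowski-sum definition that the paper leaves implicit (the observation is stated without proof there): writing $x = c + (x - c)$ with $c = \text{center}(B) \in Q$ and $x - c \in B_0$ is the whole argument. Your side remarks about the role of $Q \neq \emptyset$ and the subsequent application with $Q = T$ also match how the observation is used in Lemma~\ref{lemma:3-approx}.
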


Using Observation~\ref{obs:minkowski_sum} we can improve the approximation factor to three. It suffices to to show that the Minkowski sum of a region and $B_0$ can be covered by three balls in $\mathcal{B}_s$. 

\begin{lemma}
\label{lemma:3-approx}
Let $T$ be an equilateral triangle in $G_T$, and let $B_0$ be a ball with radius $|s|$ centered at the origin. There exists three balls in $\mathcal{B}_s$ whose union cover $T \oplus B_0$.
\end{lemma}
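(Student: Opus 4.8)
The plan is to reduce the statement to a short list of explicit distance inequalities. First I would normalize: by scaling, assume $|s| = 1$, and by translating the whole picture, assume the center $(x,y)$ of $T$ is the origin. Then $B_0$ is the unit disk at the origin, the three balls of $\mathcal{B}_s$ attached to $T$ are the unit disks $B_1, B_2, B_3$ centered at the three points listed in the construction (with $(x,y) = (0,0)$), and — using the stated normalization of $G_T$ — $T$ is the equilateral triangle of side $\tfrac{1}{9}$ with horizontal base and centroid at the origin, whose three vertices $v_1, v_2, v_3$ I would write out explicitly. The set $T \oplus B_0 = \{q : \operatorname{dist}(q, T) \le 1\}$ is compact and convex, and its boundary consists of three unit-radius circular arcs $A_1, A_2, A_3$ (arc $A_j$ centered at $v_j$) joined by the three line segments obtained by translating the sides of $T$ outward by distance $1$.

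The second step is a topological reduction: it suffices to prove $\partial(T \oplus B_0) \subseteq B_1 \cup B_2 \cup B_3$. Indeed each center $c_i$ lies at distance $\tfrac12 + \tfrac1{18\sqrt3} < 1$ from the origin, so the origin belongs to $B_1 \cap B_2 \cap B_3$; each $B_i$ is convex, hence star-shaped about the origin, so $U := B_1 \cup B_2 \cup B_3$ is star-shaped about the origin. Consequently, for any $q \notin U$ the ray from the origin through $q$ stays outside $U$ beyond $q$. Now if some point $q_0$ of $T \oplus B_0$ were outside $U$, this ray — which starts at the origin, a point of $T \subseteq T \oplus B_0$ — would leave the bounded convex set $T \oplus B_0$ at a point $q_1 \in \partial(T \oplus B_0)$ lying no closer to the origin than $q_0$, hence with $q_1 \notin U$, contradicting the boundary claim. (The down-pointing triangles of the sheared grid are handled by the same argument after a reflection.)

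The boundary claim is then checked piece by piece. Each of the three line segments and each arc $A_j$ of $\partial(T \oplus B_0)$ meets every $B_i$ in a single subsegment, respectively subarc, so a boundary piece lies in $U$ precisely when these covered subpieces overlap along it with no gap. For each of the six pieces I would name the one or two balls meant to cover it, locate the transition points — the endpoints of the piece together with the points where the piece crosses the relevant circles $\partial B_i$ — and verify that consecutive covered intervals overlap. Each such verification reduces to a single inequality between explicit algebraic expressions in the coordinates of the $v_j$ and $c_i$, which I would then discharge by direct computation.

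The main difficulty is this final step. One has to set up the correct assignment of boundary pieces to balls and enumerate the transition points with no omissions, and the resulting inequalities are essentially tight — the centers $c_1, c_2, c_3$ were evidently chosen so that the worst-case distances land right at $1$ — so there is no numerical slack to absorb a flawed case split, and the bookkeeping with the nested $\sqrt 3$ terms has to be carried out with care.
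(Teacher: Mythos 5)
Your approach is essentially the paper's: both arguments reduce to verifying that the outer boundary of $T \oplus B_0$, split into three pieces (your arcs-and-segments decomposition versus the paper's three one-third sectors), lies inside the assigned ball. Your explicit star-shapedness reduction is in fact an improvement: the paper only shows that the outer boundary of each sector is strictly inside its ball and leaves the passage from boundary to full region implicit, so that part of your plan closes a real gap in the published argument. Two remarks on the computations you defer. First, they are not tight: the worst case is the distance from a ball's centre to the junction point where two boundary pieces meet, namely $\frac{\sqrt{3}}{2}\bigl(1 + \frac{1}{9\sqrt{3}}\bigr)|s| \approx 0.922\,|s| < |s|$, so there is genuine slack and the bookkeeping is less delicate than you fear. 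Second, when you write the centres out you will find that the first one as printed, $\bigl(x - (\frac{\sqrt{3}}{4} + \frac{1}{36})|s|,\; y - (\frac{1}{4} + \frac{1}{36\sqrt{3}})|s|\bigr)$, is not placed symmetrically to the other two (the three directions from the centre of $T$ come out as $210^\circ$, $270^\circ$ and $30^\circ$, and a ball in direction $210^\circ$ does not cover the top-left piece); the verification only closes once the sign of its $y$-offset is flipped, restoring three-fold symmetry ($150^\circ$, $270^\circ$, $30^\circ$) so that a single computation handles all three pieces.
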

\begin{proof}
Consider an optimal ball $B$ of radius $|s|$, and let $T$ be the equilateral triangle containing the centre of $B$. Without loss of generality, assume that the bottom side of $T$ aligns with $x$-axis and let $(x,y)$ be the center of $T$. 

From the definition of the set $\mathcal{B}_s$ we know there exists three balls $A$, $B$, and $C$ in $\mathcal{B}_s$ centered at $(x - (\frac{\sqrt{3}}{4} + \frac{1}{36})|s|, y - (\frac{1}{4} + \frac{1}{36\sqrt{3}})|s|)$, $(x, y - (\frac{1}{2} + \frac{1}{18 \sqrt{3}})|s|)$, and $(x + (\frac{\sqrt{3}}{4} + \frac{1}{36})|s|, y + (\frac{1}{4} + \frac{1}{36\sqrt{3}})|s|)$, respectively. Each of these balls have radius $|s|$.

\begin{figure}[!htb]
    \centering
    \includegraphics[width=\textwidth]{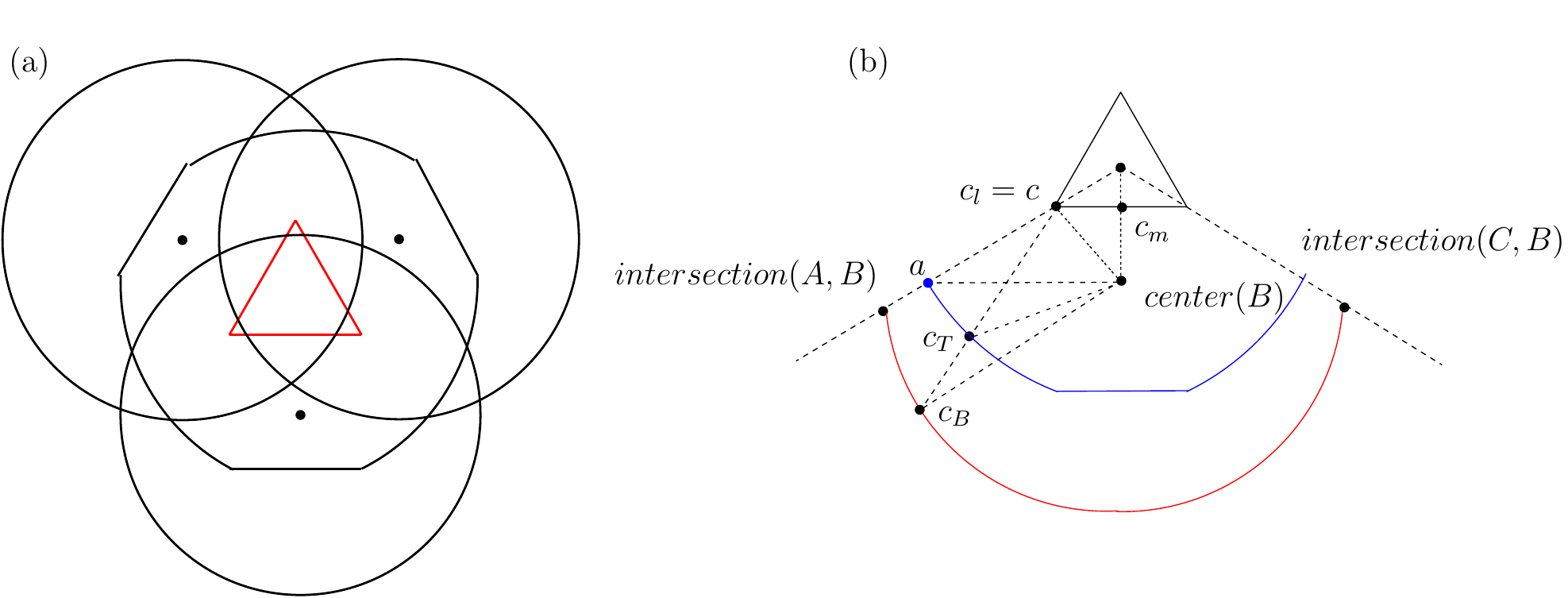}
    \caption{(a) Three balls cover $T \oplus B_0$ (notice that in actual construction, $T$ is much smaller compare to the balls.) (b) The red line shows the border of $B$. The blue line shows the border of the bottom one-third sector of $T \oplus B_0$. }
    \label{fig:3_triangle_cover}
\end{figure}

Let the bottom-left corner of $T$ be $c_l$, let $a$ be the point on the border of $T \oplus B_0$ such that $a$ is on the line extended from $(\text{center}(T), c_l)$. Due to construction, $\text{dist}(c_l, a) = |s|$. We will prove that $T \oplus B_0 \subseteq A \cup B \cup C$ by showing that $B$ covers the bottom one-third sector of $T \oplus B_0$. For now, we will focus on the bottom one-third sector of $T \oplus B_0$, see Figure~\ref{fig:3_triangle_cover}(b).

Let $c_m$ be the middle point of the bottom side of $T$. We focus on the points on $(c_l, c_m)$, and their nearest points on the border of $T \oplus B_0$. Let $c \subseteq (c_l, c_m)$, and let $c_T$ be the nearest point of $c$ on the border of $T \oplus B_0$. Let $c_B$ be the point on the border of $B$ such that $c_B$ lies on the line extended from $(c, c_T)$. We show that $B$ covers half of the one-third sector by showing that $\text{dist}(c_T, c_B) > 0$. 

Notice that $\text{intersection}(A, B)$ is guaranteed to lie on the line extended from $(c_l, a)$ since $\text{dist}(\text{center}(B), \text{center}(T)) = \text{dist}(\text{center}(A), \text{center}(T))$. 

\begin{align*}
    \measuredangle(a, \text{center}(T), \text{center}(B)) = cos^{-1}(\frac{\frac{1}{2} + \frac{1}{18 \sqrt{3}}}{1 + \frac{1}{9 \sqrt{3}}}) = \frac{\pi}{3} \\
    \measuredangle(\text{center}(T), a, \text{center}(B)) = sin^{-1}(\frac{\frac{1}{2} + \frac{1}{18 \sqrt{3}}}{1 + \frac{1}{9 \sqrt{3}}}) = \frac{\pi}{6} \\
    \measuredangle(a, \text{center}(B), \text{center}(T)) \\
    = \pi -  \measuredangle(a, \text{center}(B), \text{center}(T)) - \measuredangle(\text{center}(T), a, \text{center}(B)) 
    = \frac{\pi}{2} 
\end{align*}

Then we can calculate $\text{dist}(a, \text{center}(B))$:
\begin{align*}
    \text{dist}(a, \text{center}(B)) = cos(\frac{\pi}{6}) \cdot (1 + \frac{1}{9 \sqrt{3}}) |s| \approx 0.922 |s| < |s| 
\end{align*}

We have $\text{dist}(c, a) \geq \text{dist}(c, \text{intersection}(A, B))$ because $\text{dist}(a, \text{center}(B)) < \text{radius}(B)$, and since $\text{dist}(\text{intersection}(A, B), \text{center}(B)) = |s|$, we know that $\text{intersection}(A, B)$ lies below and to the left of $a$. Therefore $\text{dist}(a, \text{intersection}(A, B)) > 0$. 

Let $c \subseteq (c, c_m)$. We focus on the triangle created by $c$, $c_T$ and $\text{center}(B)$. As we slide $c_T$ along the border of $T \oplus B_0$ towards its bottom, $\text{dist}(\text{center}(B), c_T)$ decreases which means $\text{dist}(\text{center}(B), c_T) < 0.922|s| < |s| = \text{dist}(\text{center}(B), c_B)$. Therefore $\text{dist}(c_T, c_B) > 0$ for $c \subseteq (c_l, c_m)$. Let $c_r$ be the bottom-right corner of $T$, we can arrive at the same conclusion for $c \subseteq (c_m, c_r)$. Therefore $B$ covers the bottom one-third sector of $T \oplus B_0$. 

Using a similar argument, we can prove that the $A$ and $C$ cover the top-left, and top-right one-third sectors of $T \oplus B_0$, respectively. Therefore $T \oplus B_0 \subseteq A \cup B \cup C$, which completes the proof of the lemma.
\end{proof}

This proves that any optimal ball intersecting $s$ can be covered by three of the balls in $\mathcal{B}_s$. Using Algorithm~\ref{alg:25-approximation}, but replacing the square grid with the sheared triangular grid described in this section, we obtain the following theorem.

\begin{theorem}
\label{thm:3-approx}
 A $3$-approximation of the $\lambda$-low-density value of a set $S$ of $n$ segments in the plane can be computed in $O(n^2)$ time.
\end{theorem}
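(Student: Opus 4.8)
The plan is to combine the geometric covering result of Lemma~\ref{lemma:3-approx} with a direct quadratic-time implementation of (a modified) Algorithm~\ref{alg:25-approximation}. By Lemma~\ref{lemma:same_size} there is an optimal ball $B^*$ of radius $|s|$ for some segment $s\in S$, where $s$ is a shortest segment intersecting $B^*$; moreover $B^*\subseteq P_s$, which is covered by the sheared triangular grid $G_T$ described in Section~\ref{sec:34-approximation}. Hence $\text{center}(B^*)$ lies in some triangle $T$ of $G_T$, and by Observation~\ref{obs:minkowski_sum} we have $B^*\subseteq T\oplus B_0$, which by Lemma~\ref{lemma:3-approx} is covered by three balls of $\mathcal{B}_s$. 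Therefore at least one ball $\beta\in\mathcal{B}_s$ contains at least $\lambda/3$ of the $\lambda$ segments intersecting $B^*$, and each such segment $s'$ satisfies $|s'|\ge|s|=\text{radius}(\beta)$, so $s'$ \emph{intersects} $\beta$ in the sense of our definition.

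The algorithm, then, iterates over every $s\in S$, builds the grid $G_T$ and the associated ball set $\mathcal{B}_s$ (a constant-size set, since $G_T$ has $O(1)$ triangles and each contributes three balls), and for each ball $\beta\in\mathcal{B}_s$ counts the number of segments $s'\in S$ with $|s'|\ge\text{radius}(\beta)$ and $s'\cap\beta\neq\emptyset$; it returns the maximum count $\hat\lambda$ over all $s$ and all $\beta$. First I would verify the upper bound: every counted ball $\beta$ has radius $|s|>0$, so by Definition~\ref{def:low_density} it is intersected by at most $\lambda$ segments, giving $\hat\lambda\le\lambda$. Then the lower bound: the argument of the previous paragraph shows that when $s$ is the shortest segment of an optimal ball $B^*$, one of the balls in $\mathcal{B}_s$ is counted as intersecting at least $\lambda/3$ segments, so $\hat\lambda\ge\lambda/3$. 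Rearranging gives $\lambda/3 \le \hat\lambda \le \lambda$, i.e. $\hat\lambda$ is a $3$-approximation.

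For the running time, the outer loop runs $n$ times; for each $s$ the grid and $\mathcal{B}_s$ are constructed in $O(1)$ time, and for each of the $O(1)$ balls we scan all $n$ segments, testing segment-ball intersection and the length condition in $O(1)$ time each. This yields $O(n)$ work per segment and $O(n^2)$ total, giving the claimed bound. (One could restrict the inner scan to segments intersecting $Q_s$, which by Corollary~\ref{lemma:o_lambda} number $O(\lambda)$ among those at least as long as $s$, but since shorter segments may still need to be considered and we make no attempt here to locate the relevant segments quickly, the naive $O(n^2)$ bound is what Theorem~\ref{thm:3-approx} asserts; the speed-up is deferred to Section~\ref{sec:quadtree}.)

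The only real subtlety — and hence the main thing to get right rather than a genuine obstacle — is bookkeeping the definition of ``intersects'': a segment counts toward a ball's tally only if it is at least as long as the ball's radius. Since every ball in every $\mathcal{B}_s$ has the \emph{same} radius $|s|$, and the segments $s'$ realizing the lower bound are exactly those intersecting the optimal ball $B^*$ of radius $|s|$ (so $|s'|\ge|s|$ automatically), the length condition is consistent across the covering argument; one just has to state it carefully so that the $\lambda/3$ segments are genuinely valid witnesses for $\beta$. Everything else is a routine assembly of the lemmas already proved.
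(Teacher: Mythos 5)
Your proposal is correct and follows essentially the same route as the paper, which simply states that Theorem~\ref{thm:3-approx} follows from running Algorithm~\ref{alg:25-approximation} with the square grid replaced by the sheared triangular grid and invoking Lemma~\ref{lemma:3-approx}; you have merely made explicit the pigeonhole step, the upper bound $\hat\lambda\le\lambda$ from Definition~\ref{def:low_density}, and the handling of the length condition, all of which check out.
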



\section{Improving the running time} \label{sec:quadtree}
The bottleneck of of Algorithm~\ref{alg:25-approximation} is step~3, which for each segment $s \in S$ constructs the set $S' \subseteq S$ containing all the segments in $S$ longer than $s$ and intersecting $Q_s$. For each segment this takes linear time. However, from Corollary~\ref{lemma:o_lambda}, we know that $S'$ only contains $O(\lambda)$ segments and we also know the regions $Q_s$ for all segments $s \in S$ at the start of the algorithm. We will utilise both these observations to speed up Step 3 of the algorithm by modifying a compressed quadtree~\cite{geometric_approximation}. We start by introducing some preliminaries, and prove a key lemma that will allow us to store the segments in a compressed quadtree.

\subsection{Preliminaries}
Scale and translate the input such that all endpoints of the segments are contained in the unit square $[0,1] \times [0,1]$.

During the construction process of the quadtree, we continuously divide each square into four equal-sized squares. Each cell is also a node in the quadtree. The $x$ and $y$-coordinates of the corners of these cells have the form $k / 2^i$, where $i$ and $k$ are integers. In a grid $G_r$, a grid cell has side length $r$, see Figure~\ref{fig:grid}.

\begin{figure}[!htb]
    \centering
    \includegraphics[width=\textwidth]{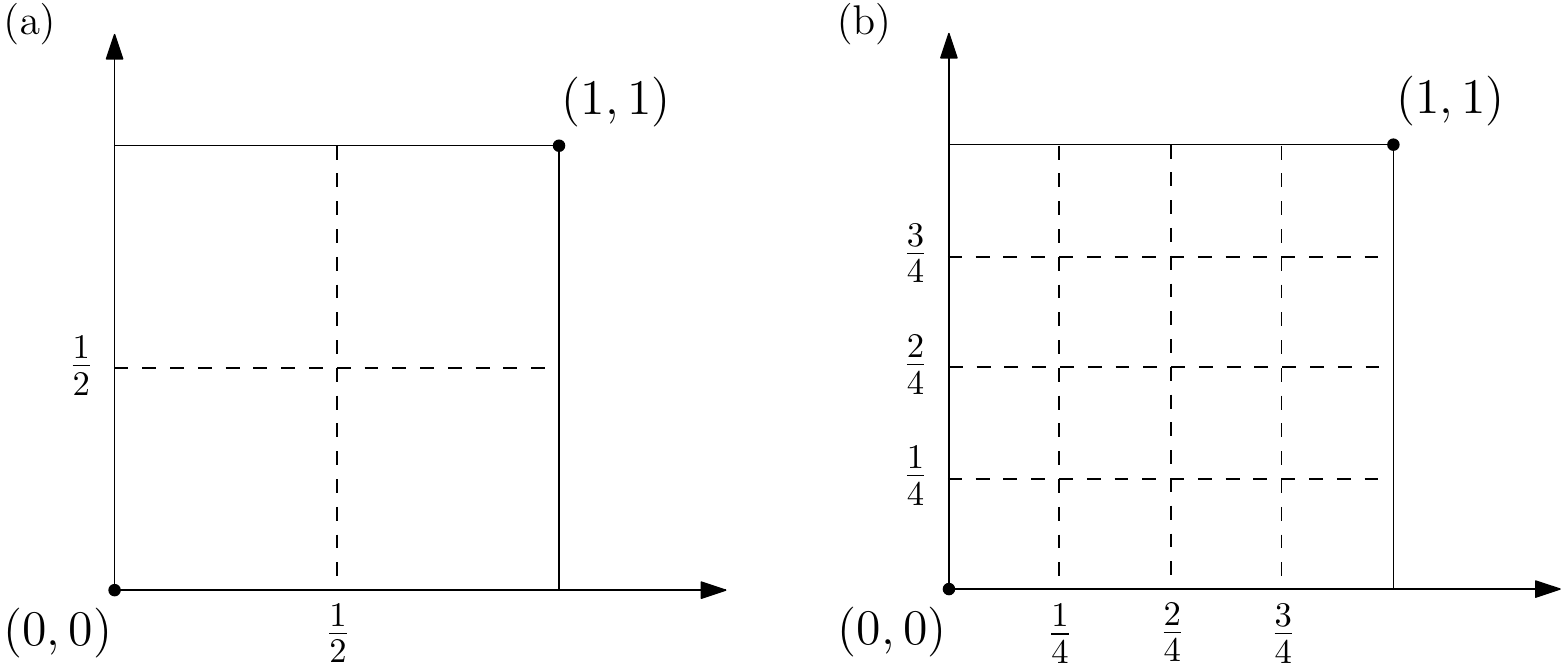}
    \caption{The grid described by the squares of size $1/2^i$ is denoted $G_{1/2^i}$. (a) Showing $G_{1/2}$, and (b) $G_{1/4}$.}
    \label{fig:grid}
\end{figure}

In the quadtree construction, the root of the quadtree is always the unit square. We say a square $\square$ is a \textit{canonical square} if and only if $\square$ is a square in the constructed quadtree. 

With unit square and canonical square defined, we will next show that there exists a constant number of canonical squares that cover $Q_s$. These canonical squares can then be used to construct a quadtree to store our segments.

In order to cover $Q_s$ with canonical squares, we first need to define the \emph{most significant separator} of $Q_s$. Consider the vertical line at $x = \frac{k}{2^i}$, $i, k \in \mathbb{Z}$, where $k$ is odd. We say that it is a vertical \textit{separator} of $Q_s$ if it intersects $Q_s$. We say $x = \frac{k}{2^i}$ is more significant than $x = \frac{k'}{2^j}$ if $i < j$. We define a horizontal separator symmetrically. Notice that each $Q_s$ can intersect exactly one most significant vertical and one most significant horizontal separator. Indeed, if $Q_s$ intersects two most significant vertical separators $x = \frac{k}{2^i}$ and $x = \frac{k'}{2^i}$, $k \neq k'$, there exits some $k'' \in (k, k')$ such that $k''$ is even, since both $k$ and $k'$ are odd. Then $Q_s$ also intersects $x = \frac{k''}{2^i} = \frac{k'' / 2}{2^{i + 1}}$, which means $x = \frac{k}{2^i}$ cannot be the most significant separator. 

The proof of the following lemma can be found in Appendix~\ref{sec:appendix_proof_lemma}.
\begin{lemma}
\label{lemma:split_constant}
For all $s \in S$, we can construct $O(1)$ canonical squares to cover $Q_s$ such that the side length of each canonical square is at most $\abs{s}$. 
\end{lemma}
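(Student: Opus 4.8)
The plan is to show that $Q_s$, the $5|s| \times 5|s|$ axis-aligned square centered at the midpoint of $s$, can be covered by a constant number of canonical squares, each of side length at most $|s|$. The main idea is to first find a grid resolution $1/2^i$ that is comparable to $|s|$, and then use the most significant vertical and horizontal separators of $Q_s$ to "anchor" a bounded region of the quadtree that must contain a covering family of canonical cells.

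First I would fix $i$ to be the smallest integer such that $1/2^i \le |s|$; then $1/2^i > |s|/2$, so a canonical square of the grid $G_{1/2^i}$ has side length in the range $(|s|/2, |s|]$. Since $Q_s$ has side length $5|s| < 10 \cdot (1/2^i)$, the square $Q_s$ is contained in a block of at most $11 \times 11$ cells of $G_{1/2^i}$ (it spans fewer than $11$ columns and fewer than $11$ rows of this grid). Each such cell is a canonical square of side length at most $|s|$, so this already gives an $O(1)$ cover — but one still has to argue that these particular $G_{1/2^i}$-cells are genuinely canonical squares of the quadtree, i.e. that they appear as nodes. This is where the most significant separator comes in: a cell of $G_{1/2^i}$ is a canonical square precisely when it is one of the $4^i$ standard dyadic subdivisions of the unit square, which it is by definition of $G_{1/2^i}$, provided the whole region lies inside $[0,1]^2$. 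Since all segment endpoints lie in the unit square and $|s| \le \sqrt{2}$, one needs a small amount of care near the boundary of $[0,1]^2$: either clip $Q_s$ to the unit square (the part outside cannot contain the center of any ball intersecting $s$ anyway, or can be handled by enlarging the working square by a constant), or observe that the relevant cells still exist in the quadtree.

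The key step that makes this clean — and the reason the paper introduces the most significant separator — is the following: the most significant vertical separator $x = k/2^j$ of $Q_s$ satisfies $2^j = O(1/|s|)$, because if $2^j$ were much larger than $1/|s|$ then, by the pigeonhole-style argument already given in the text (between any two odd multiples of $1/2^j$ at distance less than $|s| \cdot 2^j$ apart there is an even one, giving a more significant separator), $Q_s$ of width $5|s|$ would contain a more significant separator, contradiction; and symmetrically $2^j$ cannot be much smaller than $1/|s|$ or $Q_s$ would fail to contain any separator at that level while containing the center. So $j = i + O(1)$. The most significant separators partition $Q_s$ into at most four rectangular pieces, each lying in a single quadrant-aligned region, and within each piece every dyadic cell of side $1/2^{j'}$ for the appropriate $j' = i + O(1)$ is canonical; a constant number of them tile each piece.

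The main obstacle I expect is the boundary bookkeeping: verifying that the canonical squares one writes down really are nodes of the compressed quadtree and that they stay inside $[0,1]^2$, together with pinning down the exact constant (the text only claims $O(1)$). The geometric content is essentially trivial once the right resolution $1/2^i \in (|s|/2, |s|]$ is identified; everything else is checking that dyadic alignment plays nicely with $Q_s$, which is exactly what the most significant separator discussion is set up to guarantee.
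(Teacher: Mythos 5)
Your proposal is correct, and it takes a genuinely simpler route than the paper. The paper's proof cuts $Q_s$ into four pieces along the most significant vertical and horizontal separators, expands each piece to a canonical square $B$ of side less than $10|s|$ (using the significance of the separators to certify that $B$ is dyadically aligned), and then subdivides $B$ four times to get cells of side at most $|s|$ -- roughly $4\cdot 4^4$ squares in total. You instead observe that at the resolution $1/2^i \in (|s|/2,\, |s|]$ every cell of $G_{1/2^i}$ inside the unit square is already a canonical square of side at most $|s|$, and that $Q_s$, having side $5|s| < 10\cdot(1/2^i)$, meets at most $11\times 11$ such cells; this bypasses the separator machinery entirely for this lemma (it is still needed elsewhere only as a constant-time \emph{computational} device, and your cover is equally computable in $O(1)$ time via $\lfloor\log_2\rfloor$ and flooring the corners of $Q_s$ to the grid). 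The boundary issue you flag -- cells of $G_{1/2^i}$ protruding outside $[0,1]^2$ are not canonical -- is real, but the paper's own proof has exactly the same gap (it asserts ``$B$ is contained inside the unit square'' without justification), and your proposed fixes (clipping, or scaling the input into a slightly smaller sub-square) are the standard remedies; so you are, if anything, more careful than the source. One genuine error in your write-up, though it is harmless because your main argument never uses it: the claim that the most significant separator level $j$ satisfies $j = i + O(1)$ is only true in the direction $j \le i + O(1)$; the most significant separator can be far \emph{more} significant than the scale of $|s|$ (e.g.\ $x = 1/2$ may cut $Q_s$ for an arbitrarily short segment $s$ straddling the midline), so the lower bound you assert on $j$ is false. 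I would drop that paragraph and keep only the direct grid-covering argument.
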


In the proof of Lemma~\ref{lemma:split_constant} we used a constructive proof to show how to cover $Q_s$ with canonical squares. However, to compute the canonical squares efficiently, we need to find the most significant separator of $Q_s$ fast. Next we show that computing the most significant separator can be done in constant time, assuming the unit RAM model. 

\begin{lemma}
\label{lemma:find_most_significant_separator}
Finding the most significant separators of $Q_s$ takes $O(1)$ time under the unit RAM model.
\end{lemma}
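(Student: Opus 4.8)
The goal is to show that, given $Q_s$ (equivalently, given the segment $s$ and hence the coordinates of the corners of the $10|s|\times 10|s|$ square $Q_s$), the most significant vertical separator $x = k/2^i$ and most significant horizontal separator $y = k'/2^j$ can each be found in $O(1)$ time on a unit-cost RAM. By symmetry I will only describe the vertical case. Recall that a vertical separator of $Q_s$ is a line $x = k/2^i$ with $k$ odd that intersects $Q_s$, and that ``most significant'' means smallest $i$; by the uniqueness argument already given in the text, there is exactly one such line, and it is the unique line of the form $x=k/2^i$ lying in the open interval $(x_\ell, x_r)$, where $[x_\ell,x_r]$ is the $x$-extent of $Q_s$, that has the smallest possible denominator exponent $i$.

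\textbf{Key steps.} First I would reduce the problem to a purely bit-level computation on the two reals $x_\ell, x_r \in [0,1]$. Since all segment endpoints were scaled into $[0,1]\times[0,1]$, the interval $(x_\ell, x_r)$ is a subinterval of $[0,1]$, and I want the dyadic rational of minimal denominator inside it. The standard fact is: the dyadic rational $k/2^i$ of minimal $i$ strictly between $x_\ell$ and $x_r$ is obtained by looking at the binary expansions of $x_\ell$ and $x_r$, locating the most significant bit position in which they differ, and reading off the common prefix up to and including that position (with appropriate care at the boundary, handling whether $x_\ell$ or $x_r$ is itself dyadic of small denominator). Concretely, write $x_\ell$ and $x_r$ as fixed-width binary fractions in machine words; the position of the highest differing bit is found from $x_\ell \oplus x_r$ by a most-significant-set-bit operation, which is an $O(1)$ RAM primitive (or simulated in $O(1)$ via standard word tricks / a precomputed table). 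From that bit position $i$ and the shared high-order bits I recover $k$ and hence $x = k/2^i$; a constant number of comparisons then checks the degenerate cases where the minimal-denominator separator sits exactly at $x_\ell$ or $x_r$, or where $Q_s$ already coincides with a canonical square so that the ``separator'' should be taken at a corner. The horizontal separator is computed identically from $y_\ell, y_r$. Each of these is a constant number of word operations, so the total is $O(1)$.

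\textbf{Main obstacle.} The conceptual content is light; the real care is in (i) justifying that ``most significant set bit'' / ``find first differing bit'' is legitimately an $O(1)$ operation in the unit RAM model being assumed — this is standard but should be stated cleanly, possibly by noting it reduces to integer operations on $w$-bit words for word size $w = \Theta(\log n)$, or by citing the usual constant-time implementations — and (ii) getting the boundary/degenerate cases exactly right, since off-by-one errors in whether the interval is open or closed change which dyadic is ``inside''. I would isolate (ii) into a short case analysis (interior case, left-endpoint dyadic, right-endpoint dyadic, $Q_s$ canonical) and verify each reduces to a fixed number of comparisons. Given Lemma~\ref{lemma:split_constant}, once both most significant separators are in hand the rest of the $O(1)$-canonical-square construction follows the constructive proof there, also in constant time, so this lemma is the only remaining piece needed to make Step 3 of the algorithm run fast.
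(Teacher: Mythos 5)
Your proposal is correct and matches the paper's proof in essence: both locate the first differing bit of the binary expansions of the left and right $x$-coordinates of $Q_s$ via an XOR/most-significant-bit primitive assumed constant-time in the RAM model, reconstruct the candidate separator from the common prefix with a $1$ appended at that position, and resolve the boundary cases by comparing against the endpoints themselves (the paper compares the candidates $\alpha$, $\beta$, and $l$). The horizontal separator is handled symmetrically in both, so no further changes are needed.
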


\begin{proof}
We use the \textit{bit twiddling} method in \cite{geometric_approximation}. We assume that $\alpha, \beta \in [0, 1)$ are written in base two as $\alpha = 0.\alpha_1 \alpha_2$..., and $\beta = 0. \beta_1 \beta_2$... Let $\text{bit}_\triangle (\alpha, \beta)$ be the index of the first bit after the period in which they differ.

Let $\alpha, \beta$ be the $x$-coordinate of the top-left and top-right corner of $Q_s$, respectively. Let $i = \text{bit}_\triangle (\alpha, \beta)$. Notice that if the first $i - 1$ bits of $\alpha$ and $\beta$ are the same, they must reside in a canonical square which is a cell of $G_{1 / 2^{i - 1}}$. For example, if $\alpha_1 = \beta_1 = 1$, then $\alpha, \beta \in [1 / 2, 1)$. Furthermore, if $\alpha_2 = \beta_2 = 0$, then $\alpha, \beta \in [2 / 4, 3 / 4)$. Therefore, the most significant vertical separator must be either $x = \alpha$, $x = \beta$ or the immediate next separator to the right side of $\alpha$ which takes the form $x = k / 2^{i}$, where $k$ is odd. 

We can then find the most significant separator by adding $1 / 2^i$ to $\alpha$, and set the bits after $i$th bit to $0$. Let $l = 0.\alpha_1 \alpha_2 ... \alpha_{2^{i - 1}} 1$. We report the more significant separator among $\alpha$, $\beta$, and $l$ as the most significant vertical separator of $Q_s$. And since it takes $O(1)$ time to calculate $\text{bit}_\triangle (\alpha, \beta)$, calculating $l$ and comparing $l$, $\alpha$, and $\beta$ also takes $O(1)$ time.
\end{proof}

Har-Peled~\cite{geometric_approximation} justified why it is reasonable to assume that $\text{bit}_\triangle(\cdot, \cdot)$ can be computed in constant time. Modern float number is represented by exponent and mantissa. If two numbers have different exponents, then we can compute $\text{bit}_\triangle(\cdot, \cdot)$ by giving the larger component. Otherwise, we can XOR the mantissas, and $\log_2(\cdot)$ the result. Har-Peled~\cite{geometric_approximation} also noted that these are built-in operations on some CPU models. Therefore assuming that $\text{bit}_\triangle(\cdot, \cdot)$ requires constant time to compute is a reasonable assumption.

\subsection{Preprocessing: Generate canonical squares} \label{ssec:preprocessing}
Above we proved that we can generate a set $\mathcal{C}_s$ of canonical squares of size at most $|s|$ that cover~$Q_s$. Given a set $S$ of segments and the set $\mathcal{C}_s$ for each segment $s$ in $S$, we call $s$ the \textit{associated segment} of the squares in $\mathcal{C}_s$, and we call the squares in $\mathcal{C}_s$ the \textit{associated squares} of~$s$. We perform the following steps:  

\begin{enumerate}
    \item For each segments $s$, perform a linear scan of the canonical squares in $\mathcal{C}_s$ and mark the squares that intersect~$s$. We call $s$ the \textit{intersecting segment} of these squares, and we call these squares the \textit{intersecting squares} of~$s$. 
    \item Let $\mathcal{C}=\cup_{s\in S} \mathcal{C}_s$ and sort all the canonical squares in $\mathcal{C}$ based on their sizes, then by coordinates of the bottom-left corners, and, finally, by the length of the segment they intersect in decreasing order.
    \item To remove duplicate squares in $\mathcal{C}$, perform a linear scan of the sorted squares in the list, merge adjacent squares if they are identical. While merging two squares, also merge the set of segments they associate with and the set of segments they intersect. The segments that intersect the same square are ordered in decreasing order of length.
\end{enumerate}

At the end of the preprocessing step, we have $O(n)$ canonical squares according to Lemma~\ref{lemma:split_constant}. Each segment keeps a list of pointers to its canonical squares, and each canonical square holds the segment(s) it is associated with, and the segment(s) it intersects in decreasing order of length. Let~$Q_s'$ be the region covered by the set of canonical squares in $\mathcal{C}_s$. We can now summarise the preprocessing step.

\begin{lemma}
\label{lemma:preprocessing_complexity}
The preprocessing step generates $O(n)$ canonical squares, where each canonical square stores pointers to the segments intersecting it, in decreasing order of length, and its associated segments. The preprocessing step takes $O(n \log n)$ time.
\end{lemma}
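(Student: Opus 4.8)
The plan is to bound the total number of canonical squares and then charge the cost of each of the three preprocessing steps separately, showing that the sort in Step~2 is the dominant term. First, by Lemma~\ref{lemma:split_constant}, for every segment $s$ the set $\mathcal{C}_s$ consists of $O(1)$ canonical squares, each of side length at most $|s|$. Hence $\mathcal{C} = \bigcup_{s \in S} \mathcal{C}_s$ contains $|\mathcal{C}| = O(n)$ squares before duplicate removal, and therefore also $O(n)$ squares afterwards, which gives the claimed size bound. To actually construct $\mathcal{C}_s$, I would run the constructive procedure underlying the proof of Lemma~\ref{lemma:split_constant}: it only needs the most significant vertical and horizontal separators of $Q_s$ together with a constant number of further quadtree subdivisions, and by Lemma~\ref{lemma:find_most_significant_separator} the separators are computed in $O(1)$ time under the unit RAM model. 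So each $\mathcal{C}_s$ is produced in $O(1)$ time and all of $\mathcal{C}$ in $O(n)$ time.

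Next I would account for the three steps. In Step~1 we scan the $O(1)$ squares of each $\mathcal{C}_s$ and perform a constant-time test of whether $s$ intersects the square, marking $s$ as an intersecting segment where appropriate; this is $O(1)$ per segment, hence $O(n)$ in total, and at this point every square record carries exactly one associated segment and at most one intersecting segment. In Step~2 we sort the $O(n)$ records of $\mathcal{C}$ lexicographically by side length, then by the coordinates of the bottom-left corner, and finally by the length of the segment they intersect in decreasing order (records that do not intersect their associated segment are placed after those that do within a block). Since side lengths and coordinates have the form $k/2^i$, each comparison is $O(1)$, so this sort costs $O(n \log n)$. The key observation is that after the sort all copies of a given canonical square form a contiguous block, and within that block the records appear in decreasing order of the length of the segment they intersect. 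In Step~3 a single left-to-right scan merges each maximal run of identical square records into one: it concatenates their associated-segment lists (each of length one before the merge, so the total work is $\sum_s |\mathcal{C}_s| = O(n)$), concatenates their intersecting-segment lists, and redirects the $O(n)$ segment-to-square pointers to the merged record. Thanks to the ordering guaranteed by Step~2, the concatenated intersecting-segment list is already sorted by decreasing length, so no extra sorting is needed and Step~3 runs in $O(|\mathcal{C}|) = O(n)$ time.

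Summing the four contributions gives $O(n) + O(n) + O(n\log n) + O(n) = O(n\log n)$, and what remains is $O(n)$ canonical squares, each storing its associated segments and its intersecting segments in decreasing order of length, which is exactly the statement. I expect the only mildly delicate point to be the bookkeeping across Steps~2 and~3: one has to be careful that the tertiary sort key is precisely the length of the intersecting segment (decreasing), with non-intersecting copies consistently placed last in their block, so that the merge in Step~3 yields the required sorted intersecting-segment lists and the $O(n)$ pointer redirections without ever re-sorting; everything else is routine accounting of constant work per segment plus one global sort.
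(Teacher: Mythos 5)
Your proof is correct and follows essentially the same route as the paper's: $O(1)$ canonical squares per segment via Lemma~\ref{lemma:split_constant}, constructed in $O(1)$ time each via Lemma~\ref{lemma:find_most_significant_separator}, with the $O(n\log n)$ sort for duplicate removal dominating. You additionally spell out why the tertiary sort key makes the merged intersecting-segment lists come out already sorted by decreasing length, a detail the paper's proof leaves implicit but which is needed for the stated invariant.
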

\begin{proof}
Scaling and translating segments within the unit square takes $O(n)$ time. Then for each segment $s$, cover $Q_s$ in $O(1)$ time by using the method described in the proofs of Lemmas~\ref{lemma:split_constant} and~\ref{lemma:find_most_significant_separator}. In Lemma~\ref{lemma:split_constant}, we showed that each $Q_s$ generates a constant number of canonical squares, therefore this step takes $O(n)$ time, and generates $O(n)$ canonical squares. Removing duplicates requires a sorting step, which takes $O(n\log n)$ time. 
\end{proof}

\subsection{An efficient construction using compressed quadtrees} \label{ssec:quadtree}
We will construct the compressed quadtree from the $O(n)$ canonical squares constructed in the previous section. Let $\square_v$ denote the square associated with node $v$. We say a segment $s$ intersects a node $v$ in the compressed quadtree if $s$ intersects $\square_v$. We use the below lemma from~\cite{geometric_approximation}. 

\begin{lemma}[Lemma~2.11 in~\cite{geometric_approximation}] 
\label{clemma:quadtree_from_squares}
Given a list $\mathcal{C}$ of $n$ canonical squares, all lying inside the unit square, one can construct a (minimal) compressed quadtree $\mathcal{T}$ such that for any square $c \in C$, there exists a node $v \in \mathcal{T}$, such that $\square_v = c$. The construction time is $O(n \log n)$.
\end{lemma}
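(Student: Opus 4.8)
The plan is to build $\mathcal{T}$ incrementally after sorting the canonical squares in Z-order (Morton order). First I would attach to every canonical square $c\in\mathcal{C}$ a key encoding its position in the infinite quadtree: writing $c$ as the cell of $G_{1/2^i}$ with bottom-left corner $(k_x/2^i,\,k_y/2^i)$, interleave the $i$-bit binary expansions of $k_x$ and $k_y$ and tag the string with the level $i$. Under this encoding any two canonical squares are either disjoint or nested; in the nested case the key of the larger one is a prefix of the key of the smaller one; and $c$ precedes $c'$ in Z-order exactly when the subtree of $c$ in the full quadtree precedes that of $c'$. Sorting $\mathcal{C}$ by (Z-key, then level) and deduplicating in the same pass takes $O(n\log n)$ time.

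Next I would scan the sorted list, maintaining on a stack the current rightmost root-to-leaf path of the partially built compressed quadtree, seeded with a sentinel node for the unit square (which contains every $c\in\mathcal{C}$). To insert the next square $c$: pop every node $w$ on the stack with $\square_w\cap c=\emptyset$; these subtrees are finished and never touched again. Let $v$ be the node now on top, so $\square_v\supsetneq c$, and let $w_0$ be the last node popped (if any). Using $\text{bit}_\triangle$ on the Z-keys, compute $u$, the smallest canonical square containing both $\square_{w_0}$ and $c$ (take $u=c$ if nothing was popped). If $u=\square_v$, attach $c$ as a new child of $v$ in the appropriate quadrant, with a compressed edge if $c$ is much smaller than $\square_v$; otherwise create a fresh node for $u$, hang $w_0$ and $c$ off it as children in their respective quadrants, and splice $u$ below $v$, again with a compressed edge. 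Push $u$ (if new) and then $c$ onto the stack, merging any coincident data exactly as in Section~\ref{ssec:preprocessing}. Each node is pushed and popped at most once and each insertion does $O(1)$ bit operations, so the scan costs $O(n)$ and the whole construction $O(n\log n)$.

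To finish I would prove correctness and minimality. Correctness follows by induction on the scan from the invariant that, after processing a prefix of the sorted list, the structure is a valid compressed quadtree whose nodes are precisely the processed squares together with the branching squares (smallest canonical enclosers) they force; every $c\in\mathcal{C}$ ends up as a node because it is placed directly when scanned. For minimality, note the only nodes created beyond $\mathcal{C}$ are genuine branching nodes, i.e.\ smallest canonical squares containing members of $\mathcal{C}$ in two distinct quadrants, and any compressed quadtree containing $\mathcal{C}$ must contain these. The size bound $O(n)$ then follows by a standard counting argument: the leaves all lie in $\mathcal{C}$, the degree-$\ge 2$ internal nodes number at most one less than the leaves, and every degree-$1$ internal node lies in $\mathcal{C}$.

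The main obstacle is the bookkeeping of the insertion step: correctly computing $u$ from the two Z-keys, distinguishing the cases $u=\square_v$, $u=c$, and $u$ a fresh node, and verifying that the amortized cost is truly $O(1)$ — in particular that the popping is charged correctly and no node is ever re-expanded after being popped. Once the stack invariant is stated precisely, the Z-order properties, the sorting bound, and the final node count are all routine.
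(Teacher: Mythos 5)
The paper does not actually prove this statement: it is Lemma~2.11 of Har-Peled's book, imported verbatim as a black box, so there is no in-paper proof to compare against. Your proposal is a correct reconstruction of the standard argument and is essentially the construction given in the cited reference: sort the cells by $\mathcal{Z}$-order using the interleaved-bit keys, then build the compressed quadtree in one linear scan with a stack holding the current rightmost root-to-leaf path, creating a branching node at the LCA square (computed via $\text{bit}_\triangle$) whenever the next cell is not contained in the current top of stack. The invariants you state are the right ones, and the accounting (each node pushed and popped once, $O(1)$ bit operations per step, $O(n)$ total nodes since every non-$\mathcal{C}$ node is a branching node) gives the claimed $O(n\log n)$ bound dominated by sorting. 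One small wrinkle in your case analysis: when nothing is popped you set $u=c$, but then the test ``$u=\square_v$'' never fires and the ``otherwise'' branch refers to a nonexistent $w_0$; the clean statement is that if nothing is popped you simply attach $c$ as a (possibly compressed) child of $v$, and only when something is popped do you compute $u=\mathrm{LCA}(\square_{w_0},c)$ and distinguish $u=\square_v$ from $u\subsetneq\square_v$. With that fixed, the argument is complete.
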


We now apply Lemma~\ref{clemma:quadtree_from_squares}, with the $O(n)$ canonical squares constructed in Section~\ref{ssec:preprocessing} as input, we obtain a compressed quadtree, where each canonical square corresponds to an internal node. 

Once we complete the construction of the compressed quadtree $\mathcal{T}$, we start the \emph{push-down step} as follows. Sort the segments in increasing order of length, then iterate through the segments. For each segment $s \in S$, go to the internal nodes in $\mathcal{T}$ that it intersects with. Search all of their children to check if $s$ intersects with them. If $s$ does, insert $s$ at the beginning of the list of intersecting segments of that internal node and continue the search in its children.  

Using a naive analysis, since the compressed quadtree has at most $n$ nodes, applying the push-down step for all $n$ segments takes $O(n^2)$ time in total. However, below we will show that the running time improves when the analysis is done in terms of the $\lambda$-low-density value of $S$ as well as $n$.

First note that the arguments for Corollary~\ref{lemma:o_lambda} can easily be extended to prove the following corollary.

\begin{corollary}
\label{cor:o_lambda}
Let $S$ be a set of $\lambda$-low-density segments in $\mathbb{R}^2$. For any $s \in S$, let $Q$ be a $a \abs{s} \times b \abs{s}$ rectangle such that $a$ and $b$ are constants, and $P_s \subseteq Q$. The number of segments $s' \in S$ with $\abs{s'} \geq \abs{s}$ that intersect $Q$ is $O(\lambda)$.
\end{corollary}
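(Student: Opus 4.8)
The plan is to reduce Corollary~\ref{cor:o_lambda} to Corollary~\ref{lemma:o_lambda} by a covering argument. Recall that Corollary~\ref{lemma:o_lambda} already establishes that, for a $\lambda$-low-density set $S$ and any $s \in S$, the number of segments $s' \in S$ with $\abs{s'} \ge \abs{s}$ intersecting the $5\abs{s} \times 5\abs{s}$ square $Q_s$ is $O(\lambda)$. The key point is that $Q_s$ is, up to constant factors, the same shape as the rectangle $Q$ in the statement: both are axis-aligned and have side lengths that are constant multiples of $\abs{s}$. So the first step is to observe that an arbitrary $a\abs{s} \times b\abs{s}$ rectangle $Q$ with $a, b = O(1)$ can be tiled (covered) by a constant number $N = N(a,b)$ of translated copies of $Q_s$ — more precisely, by $\lceil a/5 \rceil \cdot \lceil b/5 \rceil$ copies of the $5\abs{s}\times 5\abs{s}$ square, placed on a grid of side length $5\abs{s}$, so that their union contains $Q$.

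The second step is to argue that each such translated copy behaves like a genuine $Q_{s''}$ for a suitable "virtual" segment of length $\abs{s}$. Corollary~\ref{lemma:o_lambda} is really a statement about the $5\abs{s}\times 5\abs{s}$ square centred at the midpoint of $s$; re-examining its proof (which goes through the $25$ unit-side subsquares, each coverable by a ball of radius $\abs{s}$, and invokes $\lambda$-low-density once per ball), nothing used the presence of the segment $s$ itself — only the side length $\abs{s}$ and the radius-$\abs{s}$ balls covering the square. Hence, for any axis-aligned $5\abs{s}\times 5\abs{s}$ square $R$ in the plane, the number of segments $s' \in S$ with $\abs{s'} \ge \abs{s}$ intersecting $R$ is $O(\lambda)$: partition $R$ into $25$ subsquares of side $\abs{s}$, cover each by a ball of radius $\abs{s}$, and apply Definition~\ref{def:low_density} to each of the $25$ balls, noting that any such $s'$ (being at least as long as the ball radius and meeting the ball) is counted by $\lambda$-low-density, giving at most $25\lambda$ in total.

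Combining the two steps: cover $Q$ by $N = O(1)$ axis-aligned $5\abs{s}\times 5\abs{s}$ squares $R_1,\dots,R_N$. Any segment $s' \in S$ with $\abs{s'} \ge \abs{s}$ that intersects $Q$ must intersect at least one $R_j$, and then it is counted (with multiplicity at most $N$) among the $O(\lambda)$ segments associated with the $R_j$'s by the previous paragraph. Therefore the number of such $s'$ is at most $N \cdot 25\lambda = O(\lambda)$, as claimed. The hypothesis $P_s \subseteq Q$ is not actually needed for the counting bound — it is presumably stated because it is the regime in which the corollary is applied — but if one wants to use it, it only helps, since $P_s \subseteq Q$ forces $a, b \ge 5$ and hence $Q$ already contains $Q_s$.

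The main obstacle, such as it is, is purely bookkeeping: making sure the re-proof of Corollary~\ref{lemma:o_lambda} for an arbitrary square (not centred at a segment) is legitimate, i.e. that the original argument never secretly used the segment $s$ beyond its length. Once that is checked, the covering step is a one-line observation and the constant $N(a,b)$ is explicit, so the whole proof is short.
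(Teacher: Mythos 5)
Your proposal is correct and matches the paper's intent: the paper gives no explicit proof, stating only that "the arguments for Corollary~\ref{lemma:o_lambda} can easily be extended," and that extension is precisely what you carry out — tile the constant-size rectangle by $O(1)$ squares of side $\abs{s}$, cover each by a ball of radius $\abs{s}$, and invoke Definition~\ref{def:low_density} once per ball, noting correctly that the original argument uses only the length of $s$, not the segment itself. Your observation that the hypothesis $P_s \subseteq Q$ is not needed for the bound is also accurate.
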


The corollary implies that only $O(\lambda)$ segments intersects the canonical squares covering~$Q_s$. Using this result we can now improve the running time.

\begin{lemma}
\label{lemma:quadtree_construction}
The compressed quadtree takes $O(n\log n + \lambda n)$ time to construct. The push-down step takes $O(\lambda n)$, and the resulting data structure uses $O(\lambda n)$ space.
\end{lemma}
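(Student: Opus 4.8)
The plan is to hang everything on one structural fact: \emph{after the push-down, the intersecting-segment list stored at every node $v$ of $\mathcal{T}$ has at most $\lambda$ segments}. Given this, the construction time, the push-down time, and the space bound all follow by simple bookkeeping. For the construction, Lemma~\ref{lemma:preprocessing_complexity} gives the preprocessing in $O(n\log n)$ time, and since each segment $s$ contributes only $|\mathcal{C}_s|=O(1)$ canonical squares (Lemma~\ref{lemma:split_constant}), the output consists of $O(n)$ canonical squares carrying $O(\sum_s|\mathcal{C}_s|)=O(n)$ segment--square incidences in total. Feeding these $O(n)$ squares to Lemma~\ref{clemma:quadtree_from_squares} builds $\mathcal{T}$ (with $O(n)$ nodes, each canonical square an internal node) in $O(n\log n)$ time, and attaching the precomputed incidences (the associated segments and the initial intersecting-segment lists) to the corresponding nodes costs a further $O(n)$; sorting the segments by length, which the push-down needs, costs $O(n\log n)$ and can be done in this phase. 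So the tree is built in $O(n\log n)$ time, and adding the $O(\lambda n)$ push-down below yields the stated $O(n\log n+\lambda n)$ construction bound, with the push-down contributing the $O(\lambda n)$ term.

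The core step is the key bound. I would argue that at any node $v$, every segment $s$ that is ever placed in the intersecting-segment list of $v$ satisfies $s\cap\square_v\neq\emptyset$ and $|s|\ge\text{side}(\square_v)$. If $s$ is placed there during preprocessing, then $\square_v\in\mathcal{C}_s$, so $\text{side}(\square_v)\le|s|$ by Lemma~\ref{lemma:split_constant}, and $s$ is marked only when $s\cap\square_v\neq\emptyset$. If $s$ is placed there by the push-down, then the push-down for $s$ starts at nodes of $\mathcal{C}_s$ (again of side $\le|s|$) and thereafter only recurses into children, whose squares are strictly smaller; hence every node it visits, in particular $v$, has $\text{side}(\square_v)\le|s|$, and $s$ is inserted only when $s\cap\square_v\neq\emptyset$. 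Now set $r=\text{side}(\square_v)$ and let $B$ be the ball of radius $r$ centred at the centre of $\square_v$. The circumradius of $\square_v$ is $r/\sqrt2<r$, so $\square_v\subseteq B$, and therefore every segment $s$ in $v$'s list has $s\cap B\neq\emptyset$ and $|s|\ge r=\text{radius}(B)$, that is, $s$ intersects $B$ in the sense of Definition~\ref{def:low_density}. Since $S$ is $\lambda$-low-density, $v$'s list contains at most $\lambda$ segments (this is the same local-counting principle as in Corollary~\ref{cor:o_lambda}, specialised to a quadtree cell).

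The remaining two bounds now drop out. For space: $\mathcal{T}$ has $O(n)$ nodes, each storing an intersecting-segment list of size $O(\lambda)$ by the key bound plus an associated-segment list, and the associated-segment lists have total size $O(n)$; hence $O(\lambda n)$ space overall. For the push-down time: when processing a segment $s$, the algorithm visits only nodes whose list $s$ ends up in --- the $O(1)$ nodes of $\mathcal{C}_s$ (reached in $O(1)$ time via pointers stored during preprocessing) and, below those, only nodes into which $s$ is freshly inserted --- and it does $O(1)$ work per visited node (scanning its $O(1)$ children). Summing over all segments, the push-down costs $O(n)+O(\sum_v|\text{list}(v)|)=O(n)+O(\lambda n)=O(\lambda n)$.

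The only non-routine ingredient is the key bound, and the point that must be pinned down carefully there is that a segment never propagates to a quadtree cell larger than itself: this is exactly what lets us replace ``$s$ lies in the list of $v$'' by ``$s$ intersects the radius-$\text{side}(\square_v)$ ball centred at $\square_v$'' and invoke $\lambda$-low-density directly, rather than through a chain of length classes. The incidence count $O(\sum_s|\mathcal{C}_s|)=O(n)$, the $O(n)$ node count of the compressed quadtree, and the $O(1)$-per-visited-node accounting are all routine given Lemmas~\ref{lemma:split_constant}, \ref{lemma:preprocessing_complexity} and~\ref{clemma:quadtree_from_squares}.
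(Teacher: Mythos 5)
Your proof is correct and follows the same overall skeleton as the paper's: Lemma~\ref{lemma:preprocessing_complexity} and Lemma~\ref{clemma:quadtree_from_squares} give the $O(n\log n)$ tree construction, and both the push-down time and the space reduce to showing that each of the $O(n)$ nodes stores an intersecting-segment list of size $O(\lambda)$. Where you differ is in how that key bound is justified: the paper accounts per segment, invoking Corollary~\ref{cor:o_lambda} on the region $Q_s'$ together with the observation that a node only accepts segments at least as long as its associated segment, so that each segment's constantly many canonical squares collectively receive $O(\lambda)$ insertions; you instead account per node, observing that every segment ever placed at $v$ meets $\square_v$ and has length at least $\mathrm{side}(\square_v)$, and then applying Definition~\ref{def:low_density} directly to the ball of radius $\mathrm{side}(\square_v)$ circumscribing $\square_v$ to cap the list at $\lambda$. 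Your per-node argument is slightly more self-contained and handles cleanly the situation where a single node is reached by push-downs originating from several different segments' associated squares (which the paper's per-$Q_s'$ accounting passes over quickly); both routes yield the same $O(n\log n+\lambda n)$ construction, $O(\lambda n)$ push-down, and $O(\lambda n)$ space bounds.
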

\begin{proof}
According to Lemma~\ref{lemma:preprocessing_complexity} and Lemma~\ref{clemma:quadtree_from_squares}, one can construct a compressed quadtree with $O(n)$ squares in $O(n \log n)$ time that contains the $O(n)$ required canonical squares. It remains only to analyse the running time of the push-down step. 

Sorting the segments takes $O(n\log n)$ time. Recall that in  Corollary~\ref{lemma:o_lambda}, we showed that for any $s \in S$, the number of segments that are longer than or equal to the length of $s$ that intersect with $Q_s$ is $O(\lambda)$. The same holds for $Q_s'$. Notice that in the push-down step, for each associated internal node $v$ of a segment $s$, one only add $s$ to the descendants of $v$ if they intersect $s$. The side lengths of associated canonical squares of a segment have to be less than or equal to the length of that segment. Therefore, $s$ is only added to the internal nodes whose associated segments are shorter than or equals to $s$. In another word, an internal node $v$ adds $s$ as intersecting segment if and only if $\abs{s}$ is greater than or equal to the length of the associated segment of $v$. As a result each internal node adds at most $\lambda$ segments. And since each $Q_s'$ is covered by a constant number of squares, we add $O(\lambda)$ segments to each $Q_s'$. The push-down step takes $O(\lambda n)$ time. The overall time complexity is $O(n \log n + \lambda n$).

We have a constant number of associated squares for each segment. Since each square results in at most one internal nodes and four children, we have a compressed quadtree with $O(n)$ nodes. In addition, each internal node associated with each $Q_s'$ stores $O(\lambda)$ intersecting segments. Therefore we stored $O(\lambda n)$ intersecting segments and the compressed quadtree uses $O(\lambda n)$ space in total. 
\end{proof}

\subsection{The final approximation algorithm}
Given the $3$-approximation algorithm in Section~\ref{sec:34-approximation} together with the compressed quadtree introduced in Section~\ref{ssec:quadtree}, we are now ready to present the final algorithm. In Algorithm~\ref{alg:quadtree_3-approx}, we query the compressed quadtree described in the previous section to get our $3$-approximation. Recall that $Q_s'$ is the region covered by the set of canonical squares we generated from $Q_s$ in the preprocessing step. Similarly to what we have done in Section~\ref{sec:34-approximation}, we will use a set of triangles to cover $Q_s'$, and we will cover each triangle with three balls. We denote the set of balls that cover $Q_s'$ by $\mathcal{B}_s'$.

\begin{algorithm}[!htb]
\caption{A faster 3-approximation using quadtrees} \label{alg:quadtree_3-approx}
\begin{algorithmic}[1]
\Require $S$: a set of segments in $\mathbb{R}^2$ 
\State $\mathcal{T} \gets$ a compressed quadtree storing $S$ (as described in Section~\ref{ssec:quadtree}.)
\State $\lambda \gets 0$
\ForEach{$s$ in $S$}

        \State $S' \gets$ intersecting segments of the associated nodes of $s$ of length at least $|s|$
        \State $\mathcal{B}_s' \gets$ the set of balls of radius $\abs{s}$ that cover $Q_s'$. 
        
            \ForEach{$\iscircle$ in $\mathcal{B}_s'$}
                \State $\lambda$ = max($\lambda$, number of segments in $S'$ that intersect $\iscircle$)
            \EndFor
\EndFor

\State \Return $\lambda$
\end{algorithmic}
\end{algorithm}

\begin{theorem}
\label{thm:quadtree_3-approx}
Given a set $S$ of segments in $\mathbb{R}^2$, one can compute a $3$-approximation of the density value of $S$ in $O(n \log n + \lambda n)$ time. 
\end{theorem}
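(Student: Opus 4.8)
The plan is to verify that Algorithm~\ref{alg:quadtree_3-approx} is correct and meets the stated time bound. The running time is immediate from the machinery already developed. By Lemma~\ref{lemma:quadtree_construction}, constructing $\mathcal{T}$ together with the push-down step costs $O(n\log n + \lambda n)$ and leaves each internal node holding a list of $O(\lambda)$ intersecting segments in decreasing order of length. For a fixed $s$, Step~4 reads the intersecting segments of the $O(1)$ associated nodes of $s$ and keeps those of length at least $|s|$; by Corollary~\ref{cor:o_lambda} (with $Q = Q_s'$, which is a square of side $O(|s|)$ containing $P_s$) this set $S'$ has size $O(\lambda)$, so the scan costs $O(\lambda)$. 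A sheared triangular grid of equilateral triangles of side $|s|/9$ covering $Q_s'$ has $O(1)$ cells, so assigning three radius-$|s|$ balls to each cell as in Section~\ref{sec:34-approximation} gives $|\mathcal{B}_s'| = O(1)$; testing one ball against $S'$ costs $O(|S'|) = O(\lambda)$. Hence the work per segment is $O(\lambda)$ and the total is $O(n\log n + \lambda n)$.

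For correctness I would show $\lambda/3 \le \lambda' \le \lambda$, where $\lambda'$ is the value returned. The upper bound is routine: every ball tested has radius $|s|$ and every counted segment of $S'$ has length at least $|s|$ and meets the ball, so by $\lambda$-low-density at most $\lambda$ segments are counted, whence $\lambda' \le \lambda$. For the lower bound, apply Lemma~\ref{lemma:same_size} to get an optimal ball $B^*$ and a segment $s$ with $|s| = \text{radius}(B^*)$ that meets exactly $\lambda$ segments, all of length at least $|s|$. Since $B^* \subseteq P_s \subseteq Q_s \subseteq Q_s'$, the centre of $B^*$ lies in some triangle $T$ of the grid covering $Q_s'$, so by Observation~\ref{obs:minkowski_sum} $B^* \subseteq T \oplus B_0$ and, by Lemma~\ref{lemma:3-approx} applied to $T$, $B^*$ is covered by three balls $A, B, C \in \mathcal{B}_s'$ of radius $|s|$. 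Every segment meeting $B^*$ meets at least one of $A, B, C$ and has length at least $|s| = \text{radius}(A) = \text{radius}(B) = \text{radius}(C)$, so it intersects at least one of the three in the sense of Definition~\ref{def:low_density}. Provided all $\lambda$ such segments lie in $S'$, pigeonhole forces the count computed for $s$ against one of $A, B, C$ to be at least $\lambda/3$, giving $\lambda' \ge \lambda/3$.

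The remaining, and main, obstacle is the claim that every segment $\sigma$ with $|\sigma| \ge |s|$ that meets $B^*$ belongs to $S'$, i.e.\ appears in the intersecting-segment list of some associated node of $s$ after the push-down. I would establish it via the nesting of canonical squares. Fix a point $p \in \sigma \cap B^*$. As $B^* \subseteq Q_s' = \bigcup \mathcal{C}_s$ and the squares of $\mathcal{C}_s$ are disjoint, $p$ lies in a unique square $c \in \mathcal{C}_s$; likewise $p \in \sigma \subseteq Q_\sigma \subseteq \bigcup \mathcal{C}_\sigma$, so $p$ lies in a unique square $c' \in \mathcal{C}_\sigma$. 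By the construction used to prove Lemma~\ref{lemma:split_constant}, all squares of $\mathcal{C}_s$ share a common side length $r(s)$, and $r$ is non-decreasing in $|s|$ (one may take $r(s)$ to be the largest power of $1/2$ below $|s|$); since two canonical squares that share a point are nested, $|\sigma| \ge |s|$ gives $r(\sigma) \ge r(s)$ and hence $c \subseteq c'$. Now $c'$ contains the point $p$ of $\sigma$, so $c'$ is an intersecting square of $\sigma$; in the push-down, $\sigma$ is propagated from $c'$ down into every descendant quadtree node it meets, and every node on the path from $c'$ to $c$ contains $p \in \sigma$, so $\sigma$ is inserted into the list of $c$. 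Since $c$ is an associated node of $s$ and $|\sigma| \ge |s|$, this places $\sigma$ in $S'$, closing the argument. Everything else — the geometric covering (Lemma~\ref{lemma:3-approx}) and the quadtree bounds (Lemma~\ref{lemma:quadtree_construction}) — is already in hand, so the effort concentrates on making this bookkeeping, and in particular the uniform-size and monotonicity property of $\mathcal{C}_s$ and the exact push-down semantics, fully precise.
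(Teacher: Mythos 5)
Your proposal is correct and follows essentially the same route as the paper: Lemma~\ref{lemma:quadtree_construction} and Corollary~\ref{cor:o_lambda} for the $O(n\log n+\lambda n)$ running time, and Lemma~\ref{lemma:same_size}, Observation~\ref{obs:minkowski_sum} and Lemma~\ref{lemma:3-approx} for the factor-$3$ guarantee. The only difference is that you explicitly verify that every segment of length at least $|s|$ meeting the optimal ball actually lands in the intersecting-segment lists of the associated nodes of $s$ after the push-down step — a bookkeeping point the paper's proof leaves implicit — and your nesting argument for it is sound.
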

\begin{proof}
Iterate through all the segments, and one of them must be the shortest segment $s$ that intersects an optimal ball. According to Lemma~\ref{lemma:split_constant}, the union of the associated squares of $s$ covers $Q_s$, and each associated square has side length less than or equals to $\abs{s}$. In Section~\ref{sec:34-approximation} it was shown that one can use a constant number of equilateral triangles of side length $\abs{s} / 9$ to cover a $Q_s$, therefore the same can be done to each $Q_s'$. In Lemma~\ref{lemma:3-approx}, we have shown that there exists three balls in $\mathcal{B}_s$ of radius $\abs{s}$ that cover any optimal ball centered anywhere in each of the triangles. Therefore our algorithm generates a constant number of balls that cover $Q_s'$, the region covered by the associated squares of $s$, and the ball that intersects maximum number of segments intersects at most $\lambda / 3$ segments.

We have shown in Lemma~\ref{lemma:quadtree_construction} that building a quadtree with $n$ segments takes $O(n \log n + \lambda n)$ time. For each segment, one can generate $O(1)$ canonical squares (Lemma~\ref{lemma:split_constant}), therefore there are $O(1)$ associated nodes for each segment $s$. We build the quadtree during the construction so that the intersecting segments of duplicated canonical squares are sorted in decreasing order of length. Notice that during the push-down step, we preserve this order by pushing down the shorter segments first, then the longer ones. Due to Corollary~\ref{cor:o_lambda}, the number of segments with lengths longer than or equals to $\abs{s}$ that intersect $Q_s'$ is $O(\lambda)$, hence, it takes $O(\lambda)$ time to retrieve the segments $S'$ in $S$ that intersects with $Q_s'$, and since we generates a constant number of balls, it takes $O(\lambda)$ time to find the ball that intersects the most number of segments. Summing up over all the $n$ segments the final running time of this step is $O(\lambda n + n \log n)$. 

The time complexity of this algorithm is dominated by the construction of the compressed quadtree, which is $O(n \log n + \lambda n)$.
\end{proof}

This completes the main result of the paper. In the appendix we show how one can modify the construction of the quadtree using a $\mathcal{Z}$-order of the canonical squares, as proposed by Har-Peled~\cite{geometric_approximation}. This gives the following theorem:
\begin{theorem} \label{thm:insertion}
One can maintain a $3$-approximation of the density value of set of segments in $O(\log n + \lambda^2$) amortized time per insertion. 
\end{theorem}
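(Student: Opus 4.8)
The plan is to make the static construction of Section~\ref{ssec:quadtree} incremental, using the $\mathcal{Z}$-order (Morton order) representation of a compressed quadtree of Har-Peled~\cite{geometric_approximation}, and to store alongside it, for every segment currently in $S$, the three-ball count that Algorithm~\ref{alg:quadtree_3-approx} assigns to it; the reported estimate is the maximum of these counts. Since only insertions occur, every stored count is non-decreasing and the structure at the end of the sequence coincides with the one Section~\ref{ssec:quadtree} would build statically on all $n$ segments, so the maximum can be tracked with one comparison per update and no deletion bookkeeping is needed. I would keep the $O(n)$ canonical squares in a balanced search tree ordered by $\mathcal{Z}$-order; as in~\cite{geometric_approximation} this supports inserting one canonical square — together with the $O(1)$ compressed nodes its insertion may force — in $O(\log n)$ time. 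To insert a segment $s$, compute $Q_s$ and, by Lemmas~\ref{lemma:split_constant} and~\ref{lemma:find_most_significant_separator}, its $O(1)$ covering canonical squares in $O(1)$ time, then insert each of them, for a total quadtree cost of $O(\log n)$ per update. Each node $v$ keeps its $O(\lambda)$ associated segments and its list of intersecting segments of length at least the side length of $\square_v$, which has size $O(\lambda)$ by Corollary~\ref{cor:o_lambda}; since a segment's count is obtained by scanning these lists and discarding entries that are too short, the lists need not be kept sorted.

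Next I would maintain the intersecting-segment lists. Each of the $O(1)$ new nodes $v$ created when inserting $s$ is either brand new or is produced by splitting a compression edge; processing the new nodes top-down, the correct contents of $v$'s list are exactly the entries of its parent $p$'s list that intersect $\square_v$. Indeed, a segment reaching $v$ along the push-down intersects every square on the path from its associated node down to $v$; since that path passes through $p$ and $\square_v \subseteq \square_p$, such a segment already lies in $p$'s list, and conversely any entry of $p$'s list meeting $\square_v$ should reach $v$. Hence $v$'s list is obtained by filtering $p$'s list of $O(\lambda)$ entries, in $O(\lambda)$ time. Then I would run the push-down for $s$ alone: starting from $s$'s associated nodes, descend into every child whose square $s$ intersects, prepend $s$ to that node's list, and recurse. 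Writing $N_s$ for the set of nodes to which $s$ is added, this walk visits $N_s$ together with $O(\abs{N_s}+1)$ dead-end children, so it costs $O(\abs{N_s}+1)$.

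Finally I would repair the counts. Inserting $s$ can change the count of another segment $s'$ only when $s$ enters the set $S'_{s'}$ that Algorithm~\ref{alg:quadtree_3-approx} forms for $s'$, that is when $\abs{s'}\le\abs{s}$ and $s$ has just been appended to the intersecting list of some associated node of $s'$; these $s'$ are precisely the associated segments of the nodes in $N_s$ that are no longer than $s$, and they can be enumerated while running the push-down, $O(\lambda)$ of them per visited node. For each such $s'$ I recompute its count from scratch — build the constant-size set $\mathcal{B}_{s'}'$ and, for each ball, count the segments of $S'_{s'}$ it intersects — in $O(\lambda)$ time, and likewise I compute the fresh count of $s$ itself in $O(\lambda)$ time and update the stored maximum in $O(1)$ time. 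For the running time: the final structure has $O(\lambda n)$ list entries by Lemma~\ref{lemma:quadtree_construction}, so $\sum_s\abs{N_s} = O(\lambda n)$ and the push-down and new-node filtering total $O(\lambda n)$ over the whole sequence; moreover, since each $S'_{s'}$ only grows and has size $O(\lambda)$ by Corollary~\ref{cor:o_lambda}, at most $\sum_{s'}\abs{S'_{s'}} = O(\lambda n)$ count recomputations are ever triggered, each costing $O(\lambda)$, for $O(\lambda^2 n)$ in total. Adding the $O(n\log n)$ spent on the quadtree, the amortized cost per insertion is $O(\log n + \lambda^2)$, and the estimate remains a $3$-approximation because the structure reproduces exactly what the static algorithm of Theorem~\ref{thm:quadtree_3-approx} computes.

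The main obstacle is the amortized accounting: a single insertion may touch far more than $\lambda$ nodes (arbitrarily many short segments can cluster along one long newly-inserted segment), so one must argue globally — via the $O(\lambda n)$ bounds on the number of list entries and on $\sum_{s'}\abs{S'_{s'}}$ — that the push-down incidences and the triggered recomputations telescope to $O(\lambda)$ and $O(\lambda^2)$ per update respectively. A secondary point to verify carefully is the ``filter the parent'' step: splitting a compression edge must never leave a new node's intersecting list incomplete, which holds because any segment that must appear at the new node has its associated node at or above the new node's parent, hence has length at least the side length of the parent's square, hence is already stored at the parent and is merely copied down.
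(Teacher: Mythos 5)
Your proposal is correct and follows essentially the same route as the paper's appendix: a $\mathcal{Z}$-order-maintained compressed quadtree with $O(\log n)$ per canonical-square insertion, new nodes initialised by filtering the parent's intersecting list, a push-down of the newly inserted segment, and a global charging argument using the $O(\lambda n)$ bound on total list entries to get $O(\log n+\lambda^2)$ amortized. Your version is somewhat more explicit about maintaining per-segment counts and about why the amortization (rather than a per-operation bound) is needed, but the underlying ideas coincide with the paper's.
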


\section{Experiments}
\label{cha:experiments}
We implemented a simple algorithm to obtain the $4$-approximate density values on several trajectory data sets. The algorithm is a simplification of the 3-approximation algorithm presented in Section~\ref{sec:34-approximation}.

The data sets were provided to us by the authors of~\cite{pfeifer_data}. Each data set contains a number of trajectories. Table~\ref{table:data_summary} summarises the data sets, and its information is taken from~\cite{pfeifer_data}. This experiment aims to show the practicality of our approximation method and motivate the study of low density.  

\begin{table}[!bth]
\centering
\caption{Real data sets containing trajectories in $\mathbb{R}^2$, showing number of input trajectories $n$, average number of vertices per trajectory, and a description}
\label{table:data_summary} 
\renewcommand{\arraystretch}{1.2}
\begin{tabular}{ l  l  l  l  l } 
\hline
Data set & $n$ & \#Vertices & Trajectory Description \\ [0.5ex] 
\hline
Vessel-M \cite{data_vessel} & 106  & 23.0 & MS River USA shipping vessels Shipboard AIS.  \\
Vessel-Y \cite{data_vessel} & 187  & 155.2 & Yangtze River shipping vessels Shipboard AIS.\\
Truck \cite{data_bus_truck} & 276  & 406.5 & GPS of 50 concrete trucks in Athens, Greece.\\
Bus \cite{data_bus_truck} & 148  & 446.6 & GPS of School buses.\\
Taxi \cite{data_taxi1, data_taxi2} & 180,736 & 75.7 & Beijing taxi trajectories split into trips. \\ 
Geolife \cite{data_geolife1, data_geolife2, data_geolife3} & 18,670 & 1,332.5 & People movement, mostly in Beijing, China.\\ [0.5ex] 
\hline 
Pigeon \cite{data_pigeon} & 131  & 970.0 & Homing Pigeons (release sites to home site). \\
Seabird \cite{data_seabird} & 134& 3,175.8  & GPS of Masked Boobies in Gulf of Mexico.\\
Cats \cite{data_cat} & 154 & 526.1  & Pet house cats GPS in RDU, NC, USA. \\

Buffalo \cite{data_buffalo} & 165  & 161.3  & Radio-collared Kruger Buffalo, South Africa. \\
Gulls \cite{data_gulls} & 253 & 602.1 & Black-backed gulls GPS (Finland to Africa).\\
Bats \cite{data_bats} & 545  & 127.2 & Video-grammetry of Daubenton trawling bats.\\

\hline 
\end{tabular}
\end{table}

We used the 4-approximation algorithm to estimate the density values of the trajectories in each data set. We record the total number of curves, the maximum curve size, the max, and median estimate density values, and the median density to curve size ratio for each data set. We summarise the information in Table~\ref{table:result_lambda}. Additional information and experiments can be found in Appendix~\ref{app:experiments}.

\begin{table}[!htb]
\centering
\caption{The table lists $4$-approximate density values of $12$ data sets. The second and third columns show the number of curves and the maximum curve size. The following two columns shows the maximum and median density value. The last column shows the median ratio between $\lambda$ and the size of the curve $n$. }
\label{table:result_lambda}
\renewcommand{\arraystretch}{1.2}
\begin{tabular}{ l  l  l  l  l  c } 
\hline
Data set & \#Curves & MaxCurveSize & Max & Median & Median $\lambda / n$ \\ [0.5ex] 
\hline
Vessel-M & 102 & 142 & 17 & 2.0 & 0.133 \\
Vessel-Y & 186 & 559  & 3 & 3.0 & 0.020 \\
Truck & 272 & 991  & 45 & 11.0 & 0.028 \\
Bus & 144 & 1015  & 17 & 7.0 & 0.016 \\
Taxi & 947 & 3340  & 682 & 14 & 0.091 \\
GeoLife & 999 & 64482  & 335 & 5 & 0.011 \\
\hline
Pigeon & 130 & 1645  & 665 & 28.0 & 0.038 \\
Seabird & 133 & 8556  & 1483 & 351 & 0.131 \\
Cat & 153 & 11122  & 411 & 45 & 0.224 \\
Buffalo & 162 & 479  & 82 & 21.0 & 0.169 \\
Gull & 126 & 16019  & 1520 & 50.5 & 0.159 \\
Bat & 544 & 735  & 8 & 2.0 & 0.020 \\

\hline 
\end{tabular}
\end{table}

In the experiment, we estimated the density values of trajectories in twelve real-world data sets. Our observation is that most of the data sets' estimated density values are low. In eleven data sets (all except Seabird), the median estimate density values are less than $51$. In six data sets, Vessel-Y, Truck, Bus, GeoLife, Pigeon, and Bat, the median $\lambda / n$ ratios are less than $0.04$, where $\lambda$ and $n$ are the estimated density value and the size of the curve. Although the density values are estimates, the notion of low density for trajectories is valuable. The algorithms that perform better when the density value is small, e.g. the algorithm by Driemel et al.~\cite{driemel2012}, can be applied to these data sets to improve efficiency.

\section{Concluding remarks} \label{sec:results_algorithm}
In this paper we considered the problem of approximating the $\lambda$-density value of a set of segments. Our main results is a $3$-approximation algorithm running in  $O(n \log n + \lambda n)$ time, where $\lambda$ is the density value. Previously, only an $O(n \log^3 n + \lambda n \log^2 n + \lambda^2 n)$ time algorithms was known for the special case when the segments are disjoint. Our approach can be extended to handle insertions, where each insert operation can be done in $O(\log n + \lambda^2)$ amortized time.  

We also implemented a simple $4$-approximation algorithm to estimate the density values of twelve real-world trajectory data sets. We observed that the estimated densities for most of the data sets are small constants. We also observed that the trajectories in half of the data sets have low density-to-size ratios, which indicates that low density is a practical, realistic input model for trajectories.



\bibliography{lipics-v2019-sample-article}

\appendix
\section{Proof of Lemma~\ref{lemma:split_constant}}
\label{sec:appendix_proof_lemma}
\begin{customlemma}{\ref{lemma:split_constant}}
For all $s \in S$, we can construct $O(1)$ canonical squares to cover $Q_s$ such that the side length of each canonical square is at most $\abs{s}$. 
\end{customlemma}

\begin{figure}[!htb]
    \centering
    \includegraphics[width=\textwidth]{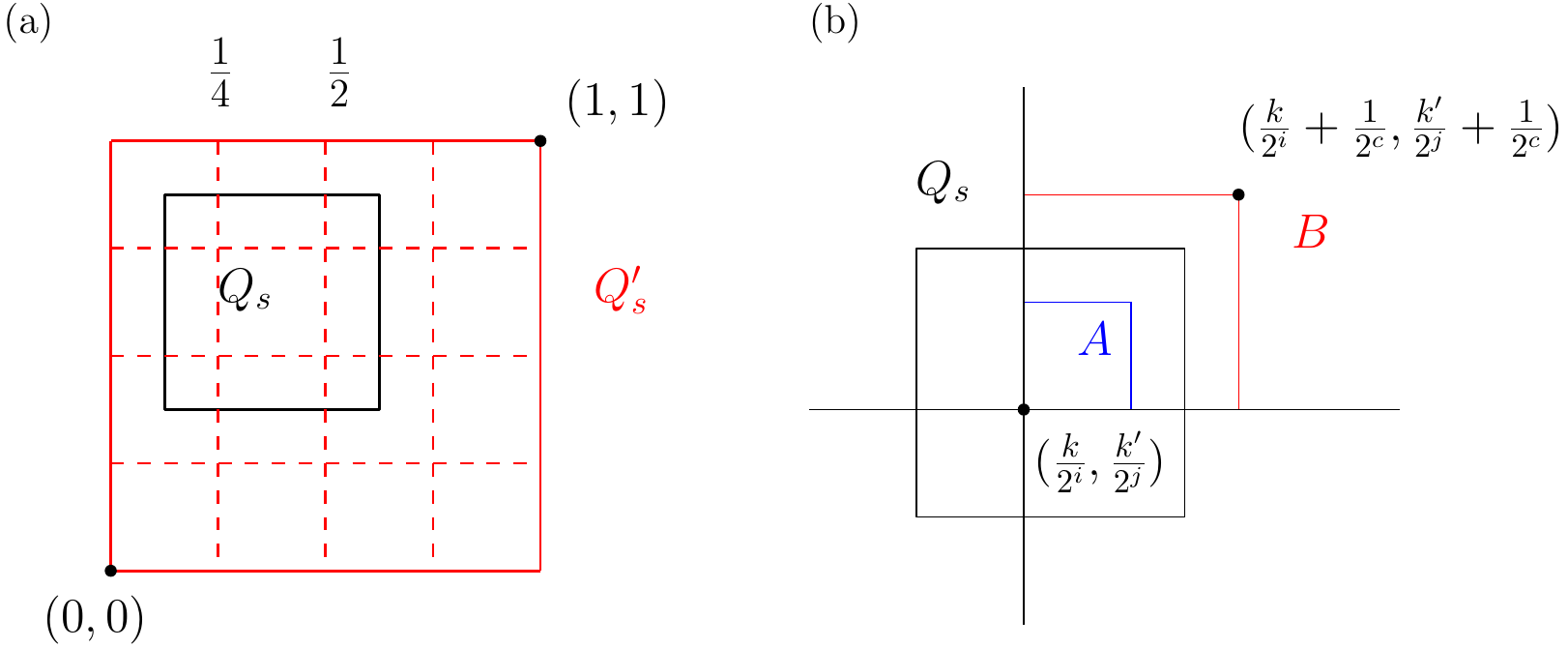}
    \caption{(a) Expanding $Q_s$ into $Q_s'$ so it can be split up into canonical squares. We say $x = \frac{1}{2}$ is a more significant separator than $x = \frac{1}{4}$. (b) Expanding the top-right part of $Q_s$ into $B$, whose side is twice as long as the side of $A$.}
    \label{fig:expand_Qs}
\end{figure}

\begin{proof}
If $Q_s$ is identical to a canonical square then we are trivially done. If not, for an arbitrary $Q_s$, let $x = \frac{k}{2^i}$ and $y = \frac{k'}{2^j}$ be the most significant vertical and horizontal separators, respectively. They intersect at $(\frac{k}{2^i}, \frac{k'}{2^j})$, and separate $Q_s$ into four parts (see Figure~\ref{fig:expand_Qs}(b)). Without loss of generality, let top-right part be the largest part. Let $(a, b)$ be the top-right corner of $Q_s$, and let $c = -\lceil \log_2(\text{max}(\abs{a - \frac{k}{2^i}}, \abs{b - \frac{k'}{2^j}}) \rceil$. 

We expand the top-right part of $Q_s$ to a square $B$ which has $(\frac{k}{2^i}, \frac{k'}{2^j})$ as bottom-left corner, and $(\frac{k}{2^i} + \frac{1}{2^c}, \frac{k'}{2^j} + \frac{1}{2^c})$ as top-right corner (non-inclusive). Notice that $c > i, j$, because the opposite suggests that $Q_s$ intersects a more significant vertical separator than $x$ or a more significant horizontal separator than $y$. 
Therefore $B$ is the intersection of the following halfplanes.
\begin{align*}
    x &\geq \frac{k}{2^i} = \frac{1}{2^c} \cdot \frac{k}{2^{i - c}} \\
    y &\geq \frac{k'}{2^j} = \frac{1}{2^c} \cdot \frac{k'}{2^{j - c}} \\
    x &< \frac{k}{2^i} + \frac{1}{2^c} = \frac{1}{2^c} \cdot (\frac{k}{2^{i-c}} + 1) \\
    y &< \frac{k'}{2^j} + \frac{1}{2^c} = \frac{1}{2^c} \cdot (\frac{k'}{2^{j - c}} + 1) 
\end{align*}
Since $c > i, j$, and $c, i, j, k, k' \in \mathbb{Z}$, both $\frac{k}{2^{i - c}}$ and $\frac{k'}{2^{j - c}}$ are integers, which means $B$ satisfies the definition of a cell in $G_{1 / 2^c}$. And since $B$ is contained inside the unit square, $B$ is a canonical square. 

We now show that $B$ can be split into a constant number of canonical squares with side lengths less than or equal to $\abs{s}$. Let $d = -\lfloor \log_2(\text{max}(\abs{a - \frac{k}{2^i}}, \abs{b - \frac{k'}{2^j}}) \rfloor$. Let $A$ be the square with $(\frac{k}{2^i}, \frac{k'}{2^j})$ as its bottom-left corner, and $(\frac{k}{2^i} + \frac{1}{2^d}, \frac{k'}{2^j} + \frac{1}{2^d})$ as its top-right corner. $Q_s$ has to be bigger than $A$ as the opposite suggests $A = B$. Therefore we assume $Q_s$ is bigger than $A$. The side length of $B$ is less than two times the side length of $Q_s$, assuming that the area of the other three parts of $Q_s$ is minimal. Therefore the side length of $B$ is less than $10 \abs{s}$. Notice that dividing a canonical square evenly into four squares also gives canonical squares. Therefore we can divide $B$ into canonical squares. Since the side length of $B$ is less than $10 \abs{s}$, it takes at most $\lceil \log_2(10) \rceil = 4$ splits before the resulting canonical squares have side lengths less than $\abs{s}$. We can expand the rest of the quadrants to the same size as $B$, and split them using the same method. 

The above argument applies to the scenario where $Q_s' = Q_s$ is already a canonical square. A constant number of splits for each of four parts of $Q_s'$ leads to a constant number of canonical squares. The proof is complete.
\end{proof}

\section{Make the compressed quadtrees dynamic}
A natural follow-up question of the preceding section is whether our compressed quadtree can be made dynamic. Insert and delete operations are helpful as they allow dynamic input. 

Har-Peled~\cite{geometric_approximation} defined an ordering of points and canonical squares, which we will call the $\mathcal{Z}$-order. As in previous sections, we are only interested in storing canonical squares. Therefore we only give an overview of the definition of $\mathcal{Z}$-order for canonical squares.

First, imagine we have a quadtree, and we traverse this quadtree with the DFS traversal (see Figure~\ref{fig:q_order}). And on each root of a subtree, we always traverse its children in a set order. We always visit the bottom-left child first, then the bottom-right child, top-left child, and top-right child. Then the DFS traversal defines a total ordering of the canonical squares. We say $\square \prec \widehat{\square}$ if we visit $\square$ before we visit $\widehat{\square}$ using the above DFS traversal.

\begin{figure}[!htb]
    \centering
    \includegraphics[width=\textwidth]{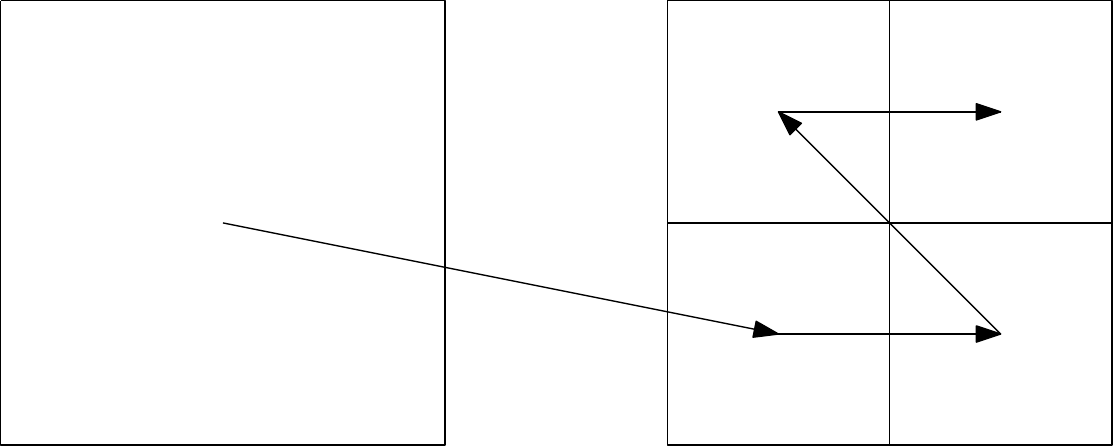}
    \caption{The $\mathcal{Z}$-order. Right canonical squares are the children of left canonical square.}
    \label{fig:q_order}
\end{figure}

With the $\mathcal{Z}$-order defined, we can store the canonical squares in a balanced binary search data structure such as a skip list or an AVL tree. However, we need to be able to resolve the $\mathcal{Z}$-order of two canonical squares in $O(1)$ time. 

Observe that in a quadtree, distinct canonical squares have distinct centers, and $\text{center}(\square)$ lies in $\widehat{\square}$ if and only if $\square \subseteq \widehat{\square}$. Furthermore, if $\square \subseteq \widehat{\square}$, then $\widehat{\square} \prec \square$, and if $\widehat{\square} \subseteq \square$, then $\square \prec \widehat{\square}$. Therefore we can decide if one square is the ancestor of another, and if so, resolve the $\mathcal{Z}$-order of two canonical squares in $O(1)$ time. 

Otherwise, we need to find the smallest canonical square that contains both $\square$ and $\widehat{\square}$ by using $\text{bit}_\triangle(\cdot, \cdot)$ operation. Recall that we denoted $\text{bit}_\triangle(\alpha, \beta)$ as the first  bit after the period in which they differ, where $\alpha = 0.\alpha_1 \alpha_2 ...$, and $\beta = 0.\beta_1 \beta_2 ...$. By using $\text{bit}_\triangle(\cdot, \cdot)$ on the centers of $\square$ and $\widehat{\square}$, we can obtain the level then length of their smallest common canonical square $\widetilde{\square}$. Once we get the length, we can cast the center of $\square$ to the bottom-left corner of $\widetilde{\square}$ to obtain $\widetilde{\square}$. The above description was summarised in the below corollary from \cite{geometric_approximation}.

\begin{customcorollary}{2.17}[from~\cite{geometric_approximation}]
Assuming that the $\text{bit}_\triangle$ operation and the $\lfloor \cdot \rfloor$ operation can be performed in constant time, then one can compute the LCA (smallest canonical square) of two points (or cells) in constant time. Similarly, their $\mathcal{Z}$-order can be resolved in constant time.
\end{customcorollary}

Being able to resolve the ordering of two canonical squares in $O(1)$ time, one can store these canonical squares in a sorted, balanced binary data structure. Therefore we have the below theorem from \cite{geometric_approximation}.

\begin{customthm}{2.22}[from~\cite{geometric_approximation}]
\label{cthm:quadtree_operations}
Assuming one can compute the $\mathcal{Z}$-order (of two points or cells) in constant time, then one can maintain a compressed quadtree of a set of points in $O(\log n)$ time per operation, where insertions, deletions, and point-location query are supported. Furthermore, this can be implemented using any data structure for ordered-set that supports an operation (insert, delete, and point-location query) in logarithmic time. 
\end{customthm}

Our compressed quadtree can handle insert operation using Algorithm~\ref{alg:insert}. The invariants maintained by the algorithm are the density value and the fact that each internal node stores its intersecting and associated segments.

\begin{algorithm}[!htb]
\caption{Insert} \label{alg:insert}
\begin{algorithmic}[1]

\Require $s$: a segment in $\mathbb{R}^2$; $qt$: a compressed quadtree; $\lambda$: the $3$-approximation of the density value of segments in $qt$.
\State $squares$ $\gets$ the canonical squares split from $Q_s'$
\State Insert $squares$ into $qt$ (using Theorem~\ref{cthm:quadtree_operations}), and merge two squares if they exist in $qt$.
\ForEach{$v$ in the inserted nodes}
    \State $v_p \gets$ the parent of $v$
    \State Add intersecting segments of $v_p$ that are longer than or equals to $\abs{s}$ as intersecting segments of $v$
    \ForEach {descendant $d$ of $v$}
        \State insert $s$ as the intersecting segments of $d$ if $\square_d$ intersects $s$
        \State $\lambda \gets$ max($\lambda$, 3-approximation($\square_v$, $S'$)) 
    \EndFor
\EndFor

\State \Return $\lambda$

\end{algorithmic}
\end{algorithm}

\begin{theorem}
Insert operation, Algorithm~\ref{alg:insert},  takes worst case $O(\lambda n )$ time, and amortized $O(\log n + \lambda^2$) time per segment, and can be used to update the $3$-approximation of the $\lambda$-low-density value in the same time. 
\end{theorem}
\begin{proof}
The correctness of inserting a canonical square into a compressed quadtree is shown in Theorem~\ref{cthm:quadtree_operations} from \cite{geometric_approximation}. 

We focus on proving that the insert operation correctly maintains the lists of intersecting segments of $\mathcal{T}$. We know that a segment $s$ can only intersect $\square$ if $s$ intersects a canonical square containing $\square$. Therefore for each inserted node $v$, every segment that intersects $\square_v$ must be part of the intersecting segments of $\square_{v_p}$. Thus we only need to visit $v_p$ to retrieve the intersecting segments of $v$. And by using the same reasoning, only the descendants of $v$ can intersect $s$, and we can preserve the decreasing order of the intersecting segments of $v$ by using a binary insertion. 

According to Theorem~\ref{cthm:quadtree_operations} from~\cite{geometric_approximation}, inserting $O(1)$ canonical squares into a quadtree takes $O(\log n)$ time. Recall that in Corollary~\ref{cor:o_lambda}, we showed that the number of segments that intersect any bounding box with side length at most a constant times the length of $s$ is $O(\lambda)$. Therefore each node of $\mathcal{T}$ contains $O(\lambda)$ intersecting segments. Thus, for each inserted node $v$, iterating over the intersecting segments of $v_p$ takes $O(\lambda)$ time. And inserting $s$ as the intersecting segments of any node takes $O(\log \lambda)$ using binary insertion. 

What remains is to upper-bound how many nodes we need to insert $s$ into. In the worst case, we may need to insert $s$ into $O(n)$ nodes. Running our $3$-approximation algorithm on $n$ nodes takes $O(\lambda n)$ time. Therefore the insert operation takes worst case $O(\log n + n \log \lambda + \lambda n) \subseteq O(\lambda n)$ time. 

However, we can argue that on average, we only update $O(\lambda)$ nodes per insert. Since each $Q_s'$ intersects $O(\lambda)$ segments, after $n$ consecutive insert operations, we insert only $O(\lambda n)$ segments as the intersecting segments of various nodes. Therefore we insert $s$ into $O(\lambda)$ nodes on average. The amortized time of each insert operation is $O(\log n + \lambda \log \lambda + \lambda^2) \subseteq O(\log n + \lambda^2)$. This completes the proof.

\end{proof}

\section{Appendix: Experiments}\label{app:experiments}
We implemented a simple algorithm to obtain the $4$-approximate density values on several trajectory data sets. The algorithm is a simplification of the 3-approximation algorithm presented in Section~\ref{sec:34-approximation}.

The data sets were provided to us by the authors of~\cite{pfeifer_data}. Each data set contains a number of trajectories. Table~\ref{table:data_summary2} summarises the data sets, and its information is taken from~\cite{pfeifer_data}. This experiment aims to show the practicality of our approximation method and motivate the study of low density. By estimating the density values of real-world trajectories, we hope to show that many curves are low-density; therefore, the notion of low density is a practical, realistic input model. 

\begin{table}[!htb]
\centering
\caption{Real data sets containing trajectories in $\mathbb{R}^2$, showing number of input trajectories $n$, average number of vertices per trajectory, and a description}
\label{table:data_summary2}
\renewcommand{\arraystretch}{1.2}
\begin{tabular}{ l  l  l  l  l } 
\hline
Data set & $n$ & \#Vertices & Trajectory Description \\ [0.5ex] 
\hline
Vessel-M \cite{data_vessel} & 106  & 23.0 & MS River USA shipping vessels Shipboard AIS.  \\
Vessel-Y \cite{data_vessel} & 187  & 155.2 & Yangtze River shipping vessels Shipboard AIS.\\
Truck \cite{data_bus_truck} & 276  & 406.5 & GPS of 50 concrete trucks in Athens, Greece.\\
Bus \cite{data_bus_truck} & 148  & 446.6 & GPS of School buses.\\
Taxi \cite{data_taxi1, data_taxi2} & 180,736 & 75.7 & Beijing taxi trajectories split into trips. \\ 
Geolife \cite{data_geolife1, data_geolife2, data_geolife3} & 18,670 & 1,332.5 & People movement, mostly in Beijing, China.\\ [0.5ex] 
\hline 
Pigeon \cite{data_pigeon} & 131  & 970.0 & Homing Pigeons (release sites to home site). \\
Seabird \cite{data_seabird} & 134& 3,175.8  & GPS of Masked Boobies in Gulf of Mexico.\\
Cats \cite{data_cat} & 154 & 526.1  & Pet house cats GPS in RDU, NC, USA. \\

Buffalo \cite{data_buffalo} & 165  & 161.3  & Radio-collared Kruger Buffalo, South Africa. \\
Gulls \cite{data_gulls} & 253 & 602.1 & Black-backed gulls GPS (Finland to Africa).\\
Bats \cite{data_bats} & 545  & 127.2 & Video-grammetry of Daubenton trawling bats.\\

\hline 
\end{tabular}
\end{table}

\subsection{Experimental setup}

We used the 4-approximation algorithm to estimate the density values of the trajectories in each data set. We record the total number of curves, the maximum curve size, the max, and median estimate density values, and the median density to curve size ratio for each data set. We summarise the information in Table~\ref{app:table:result_lambda}. We present the distributions of estimate density values in Figure~\ref{fig:distribution1} and \ref{fig:distribution2}. 

\begin{table}[!htb]
\centering
\caption{The table lists $4$-approximate density values of $12$ data sets. The second and third columns show the number of curves and the maximum curve size. The following two columns shows the maximum and median density value. The last column shows the median ratio between $\lambda$ and the size of the curve $n$. }
\label{app:table:result_lambda}
\renewcommand{\arraystretch}{1.2}
\begin{tabular}{ l  l  l  l  l  c } 
\hline
Data set & \#Curves & MaxCurveSize & Max & Median & Median $\lambda / n$ \\ [0.5ex] 
\hline
Vessel-M & 102 & 142 & 17 & 2.0 & 0.133 \\
Vessel-Y & 186 & 559  & 3 & 3.0 & 0.020 \\
Truck & 272 & 991  & 45 & 11.0 & 0.028 \\
Bus & 144 & 1015  & 17 & 7.0 & 0.016 \\
Taxi & 947 & 3340  & 682 & 14 & 0.091 \\
GeoLife & 999 & 64482  & 335 & 5 & 0.011 \\
\hline
Pigeon & 130 & 1645  & 665 & 28.0 & 0.038 \\
Seabird & 133 & 8556  & 1483 & 351 & 0.131 \\
Cat & 153 & 11122  & 411 & 45 & 0.224 \\
Buffalo & 162 & 479  & 82 & 21.0 & 0.169 \\
Gull & 126 & 16019  & 1520 & 50.5 & 0.159 \\
Bat & 544 & 735  & 8 & 2.0 & 0.020 \\

\hline 
\end{tabular}
\end{table}

\begin{figure}[!htb]

\centering
    \begin{minipage}{0.5\textwidth}
        \centering
        \includegraphics[width=\textwidth]{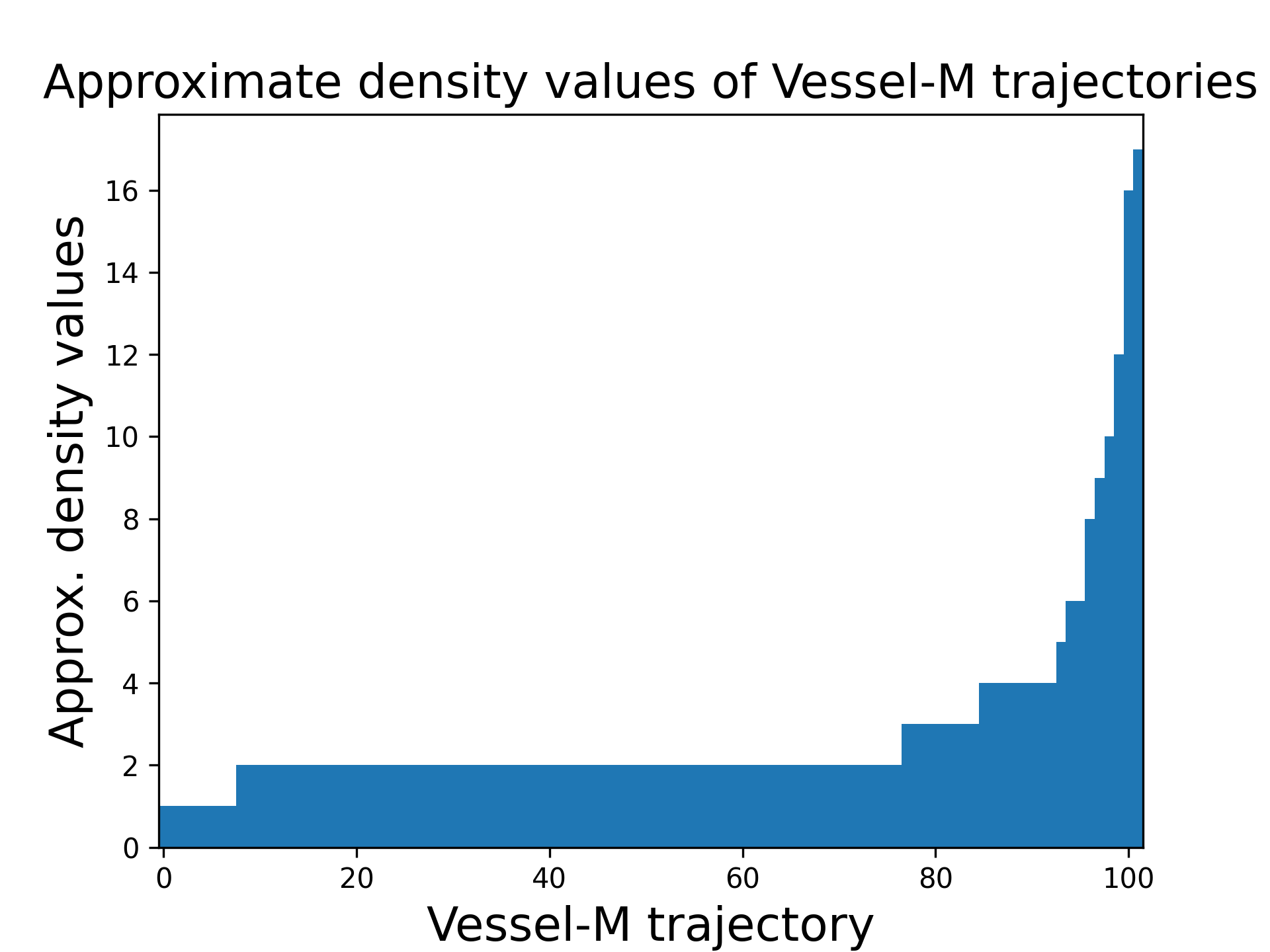} 
    \end{minipage}\hfill
    \begin{minipage}{0.5\textwidth}
        \centering
        \includegraphics[width=\textwidth]{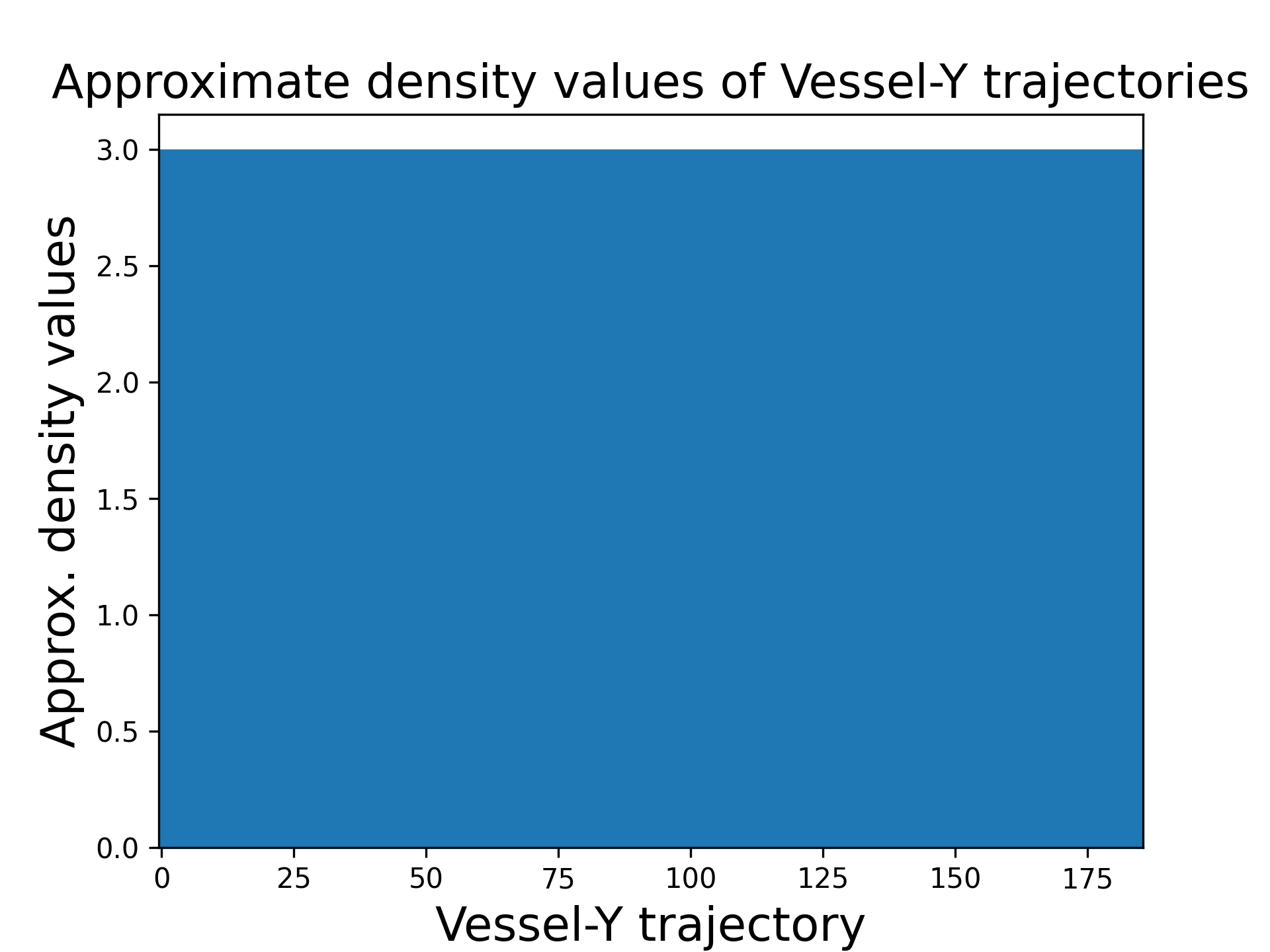} 
    \end{minipage}
    
    \begin{minipage}{0.5\textwidth}
        \centering
        \includegraphics[width=\textwidth]{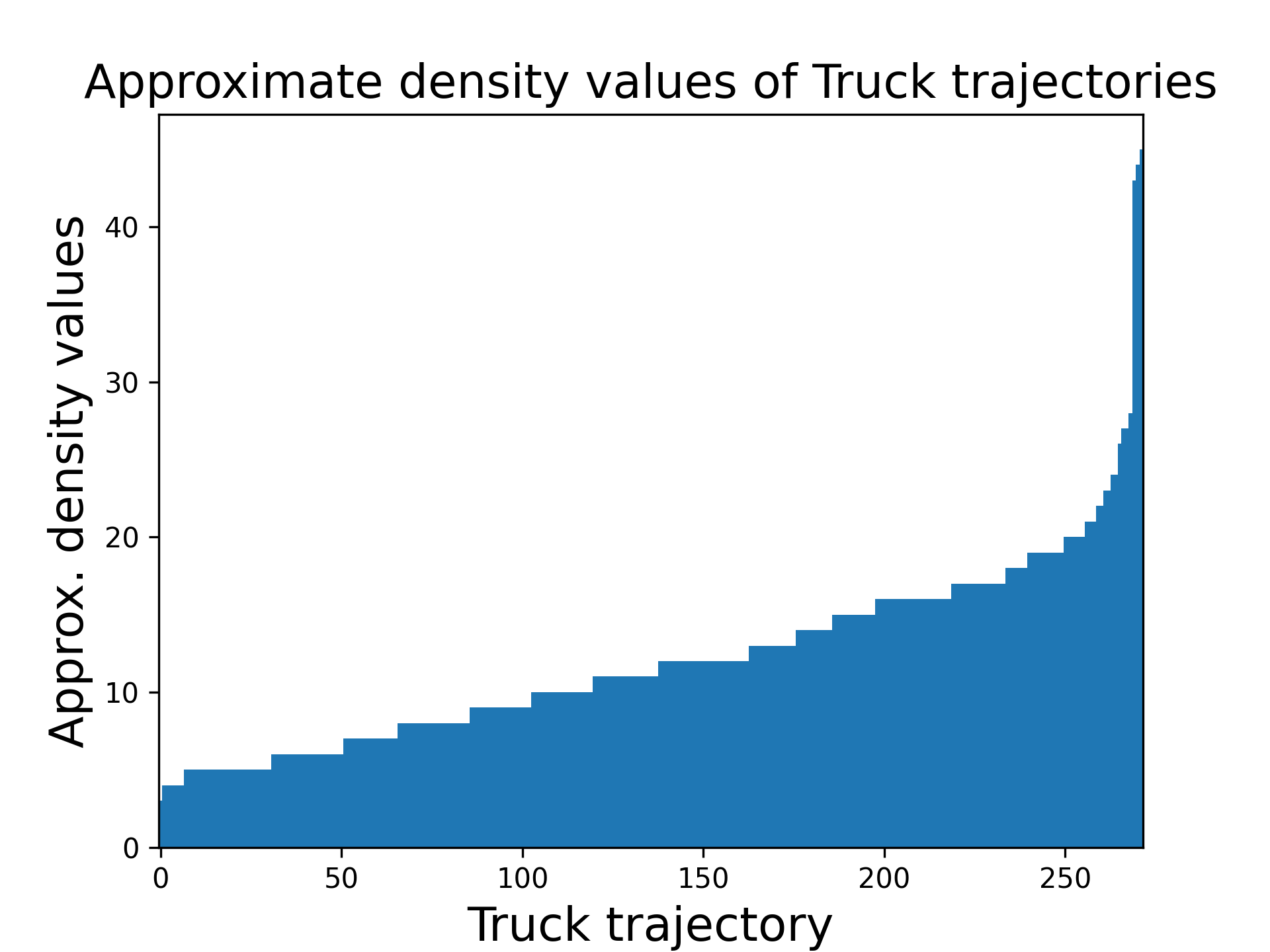} 
    \end{minipage}\hfill
    \begin{minipage}{0.5\textwidth}
        \centering
        \includegraphics[width=\textwidth]{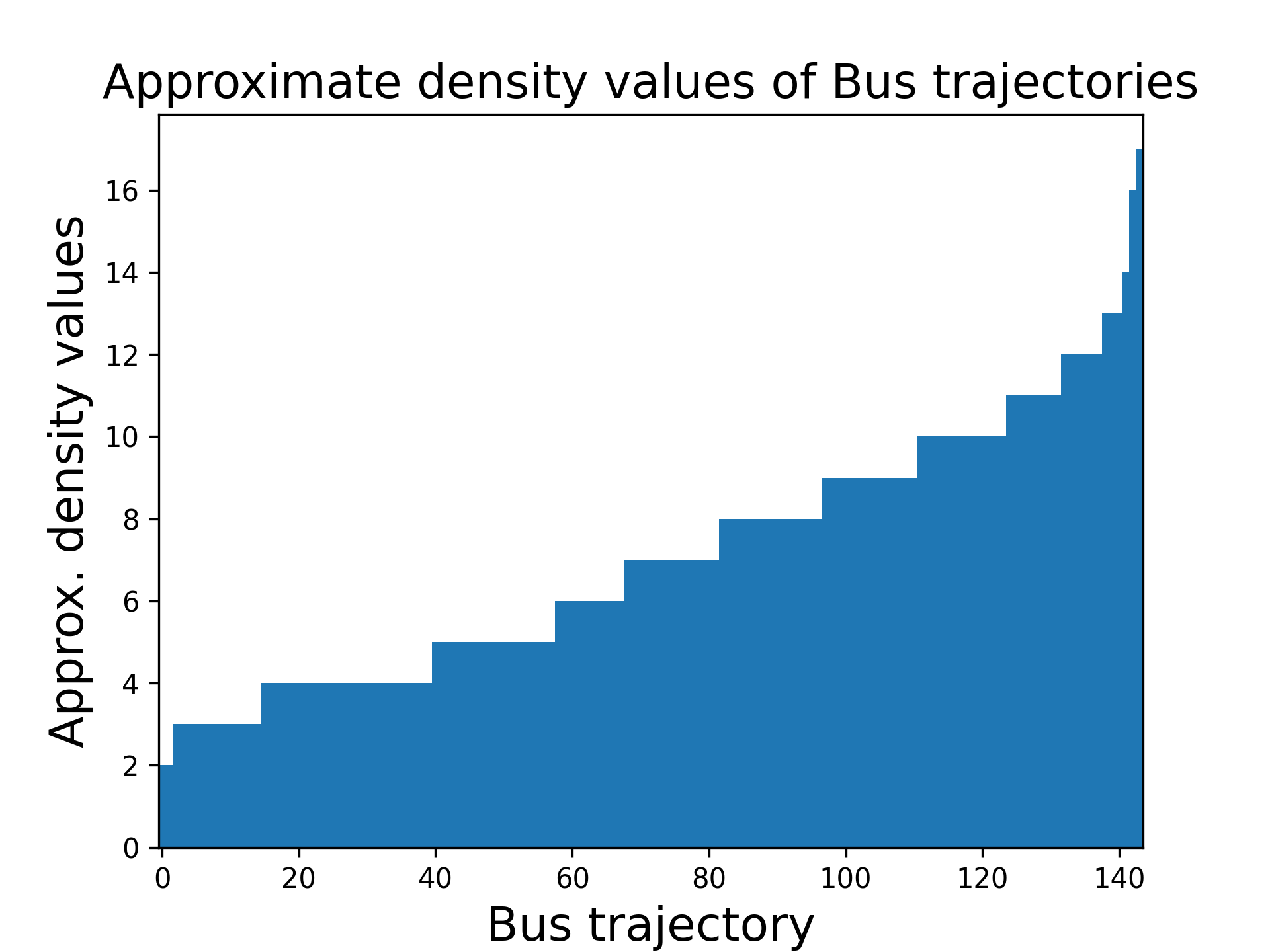} 
    \end{minipage}
    
    \begin{minipage}{0.5\textwidth}
        \centering
        \includegraphics[width=\textwidth]{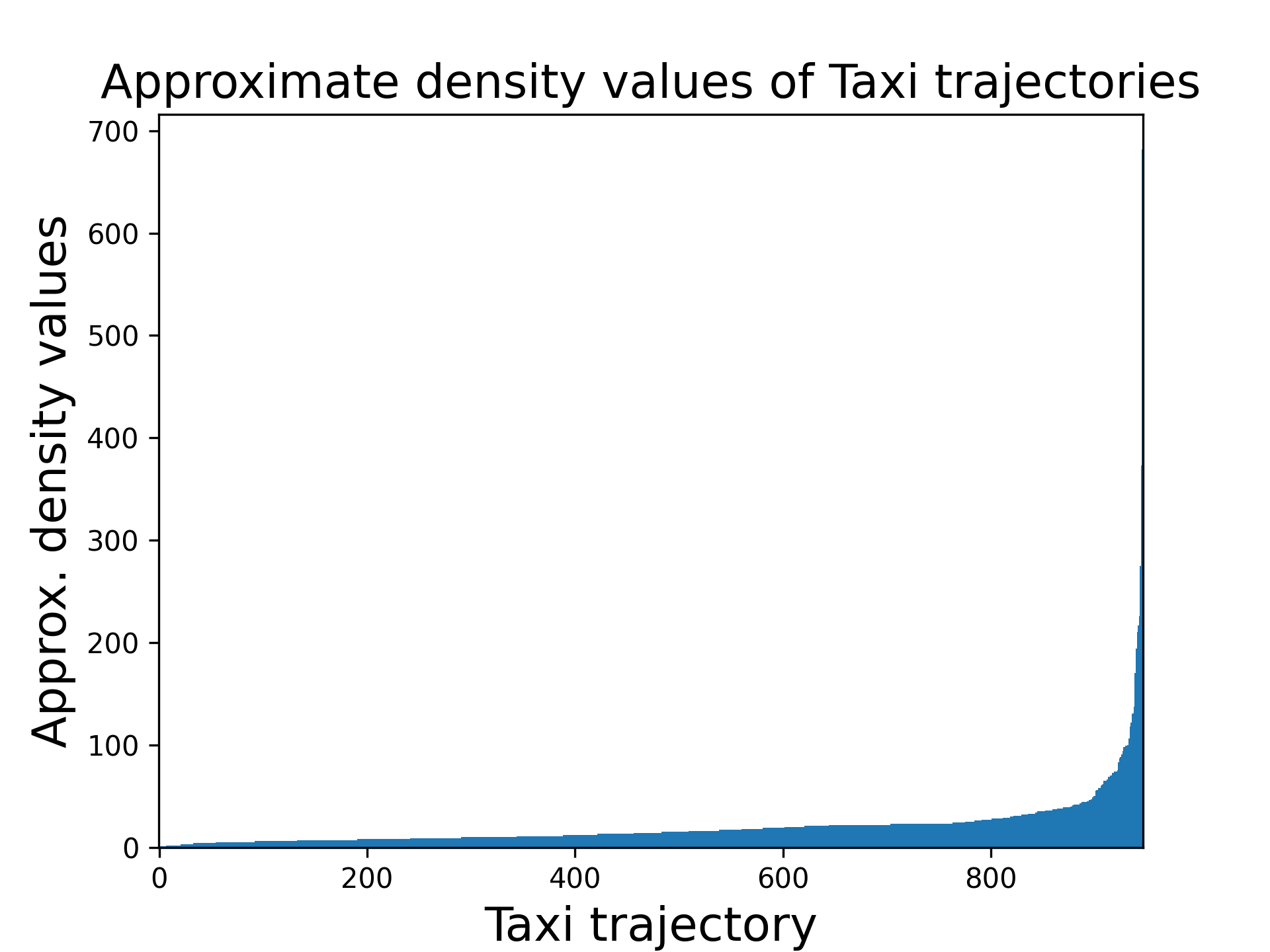} 
    \end{minipage}\hfill
    \begin{minipage}{0.5\textwidth}
        \centering
        \includegraphics[width=\textwidth]{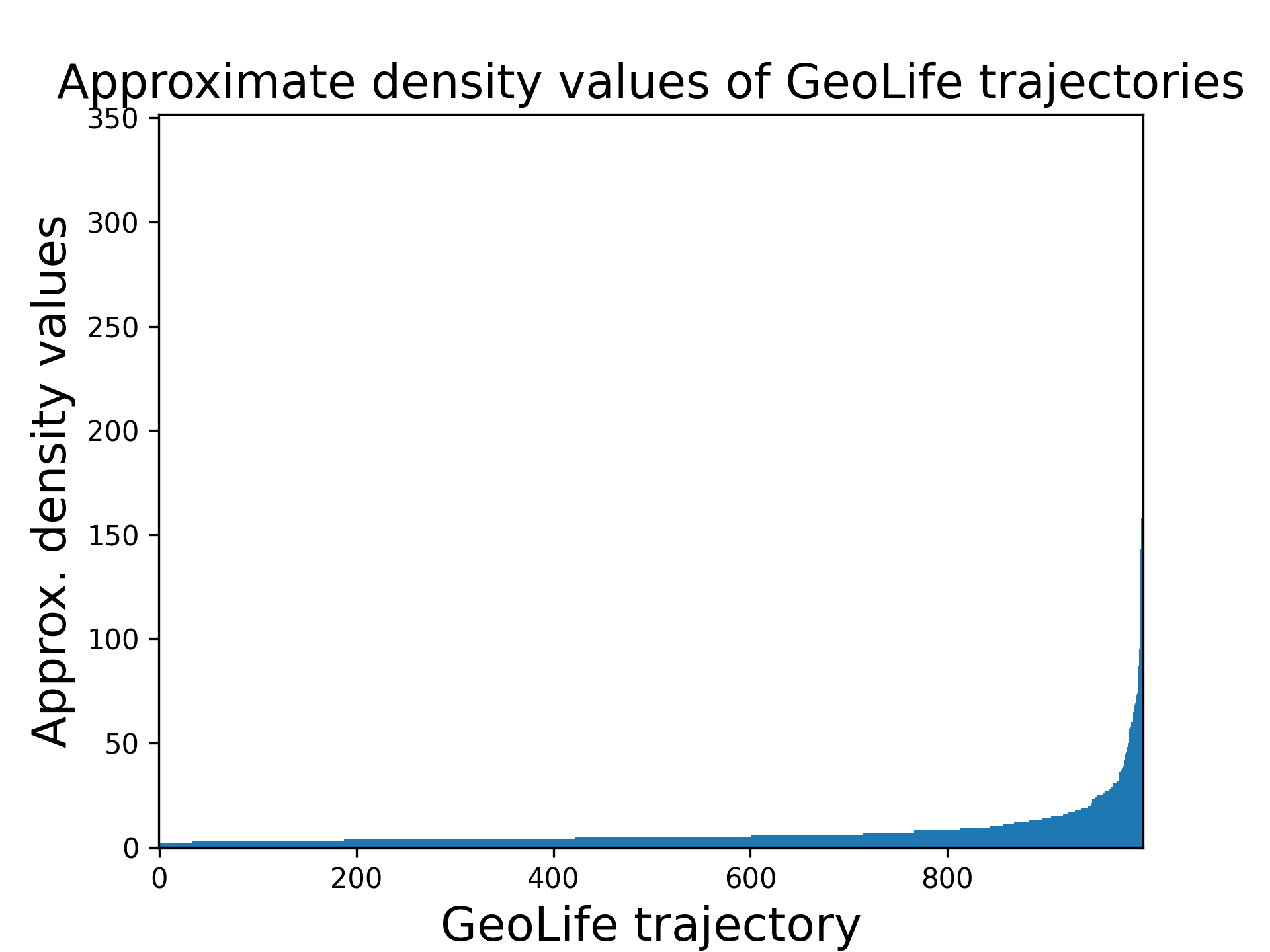} 
    \end{minipage}

\caption{Distributions of approximate density values of trajectories in Vessel-M, Vessel-Y, Truck, Bus, Taxi, and GeoLife data sets}
\label{fig:distribution1}
\end{figure}

\clearpage

\begin{figure}[!htb]

\centering
    \begin{minipage}{0.5\textwidth}
        \centering
        \includegraphics[width=\textwidth]{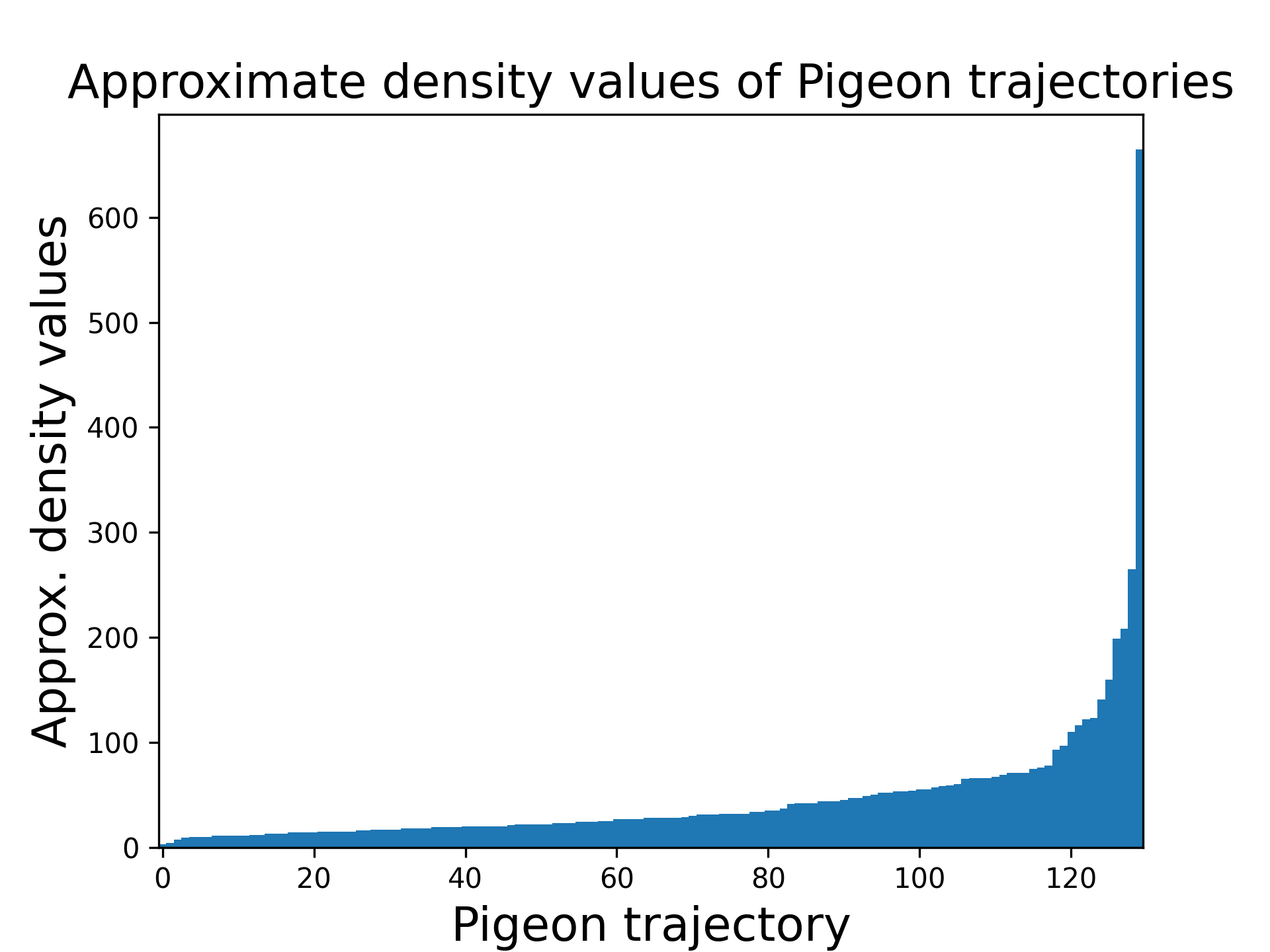} 
    \end{minipage}\hfill
    \begin{minipage}{0.5\textwidth}
        \centering
        \includegraphics[width=\textwidth]{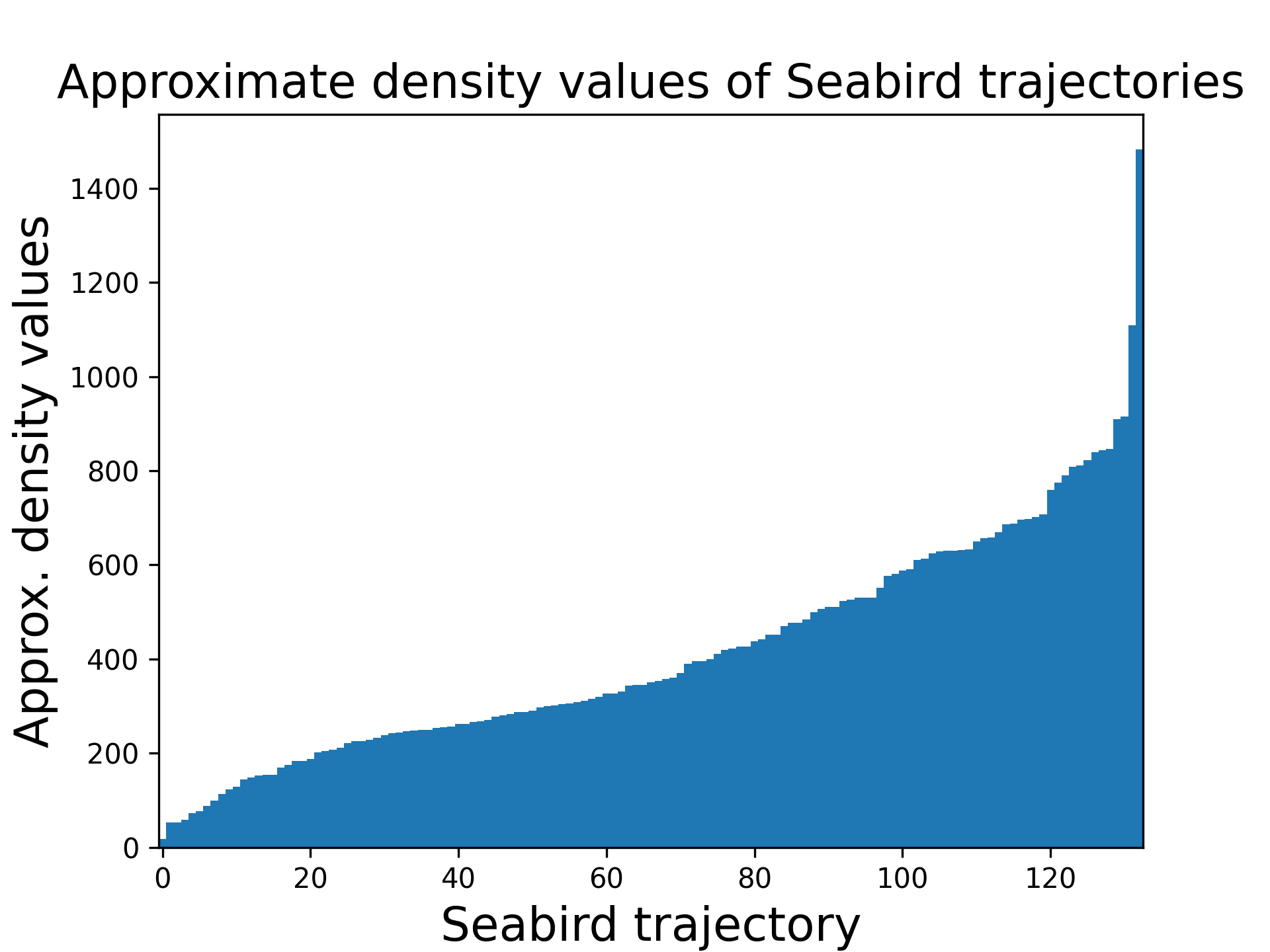} 
    \end{minipage}
    
    \begin{minipage}{0.5\textwidth}
        \centering
        \includegraphics[width=\textwidth]{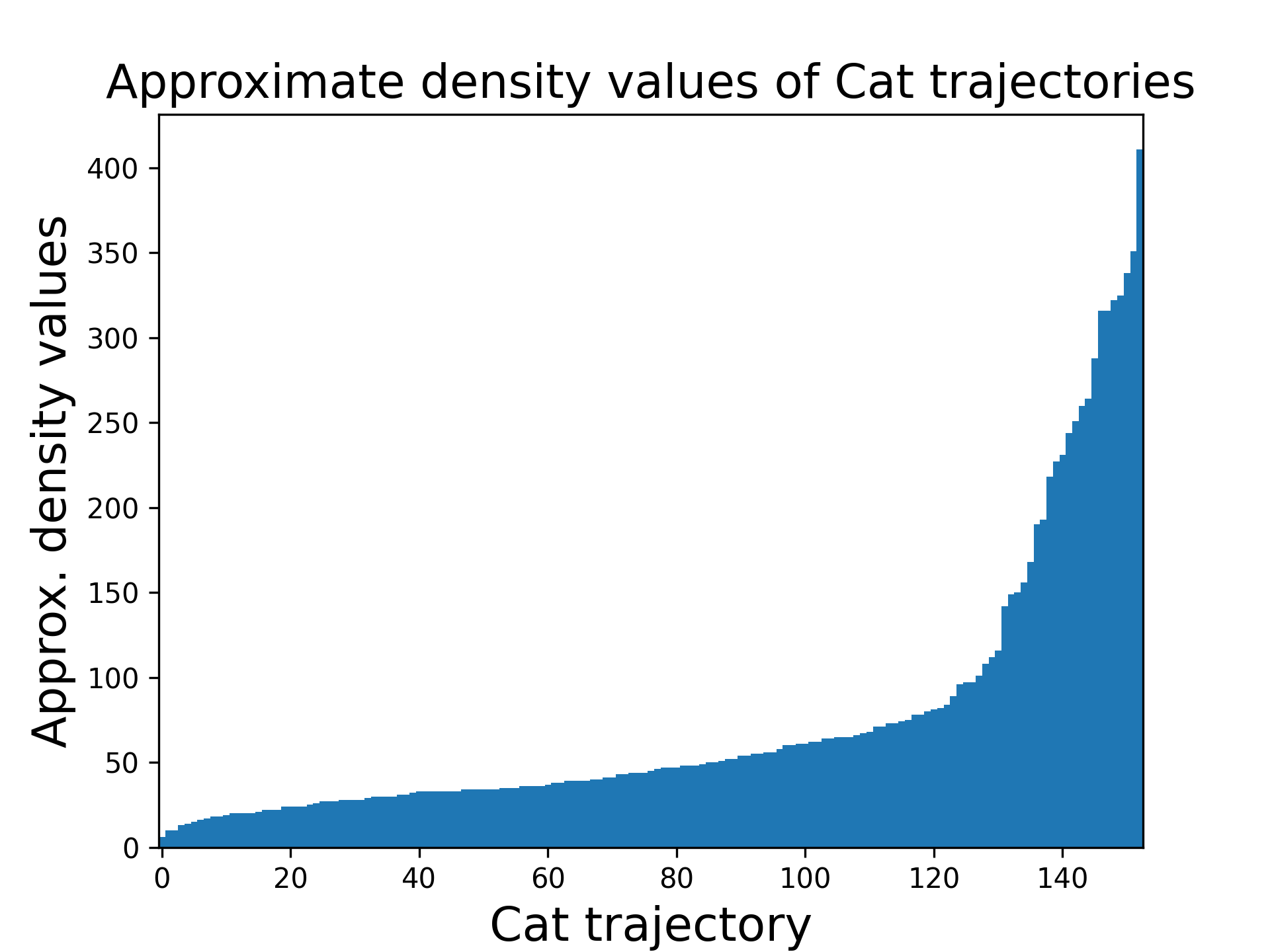} 
    \end{minipage}\hfill
    \begin{minipage}{0.5\textwidth}
        \centering
        \includegraphics[width=\textwidth]{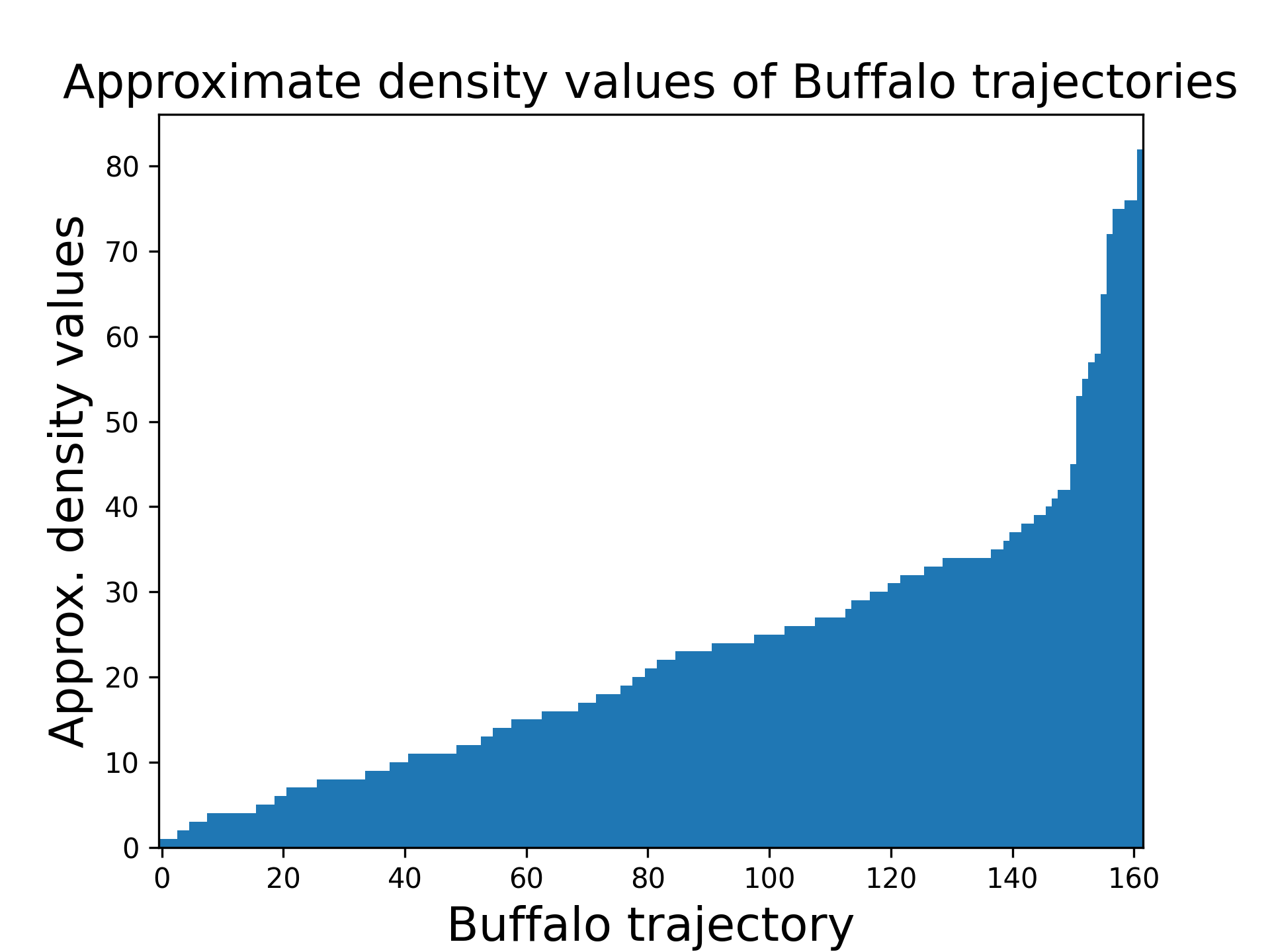} 
    \end{minipage}
    
    \begin{minipage}{0.5\textwidth}
        \centering
        \includegraphics[width=\textwidth]{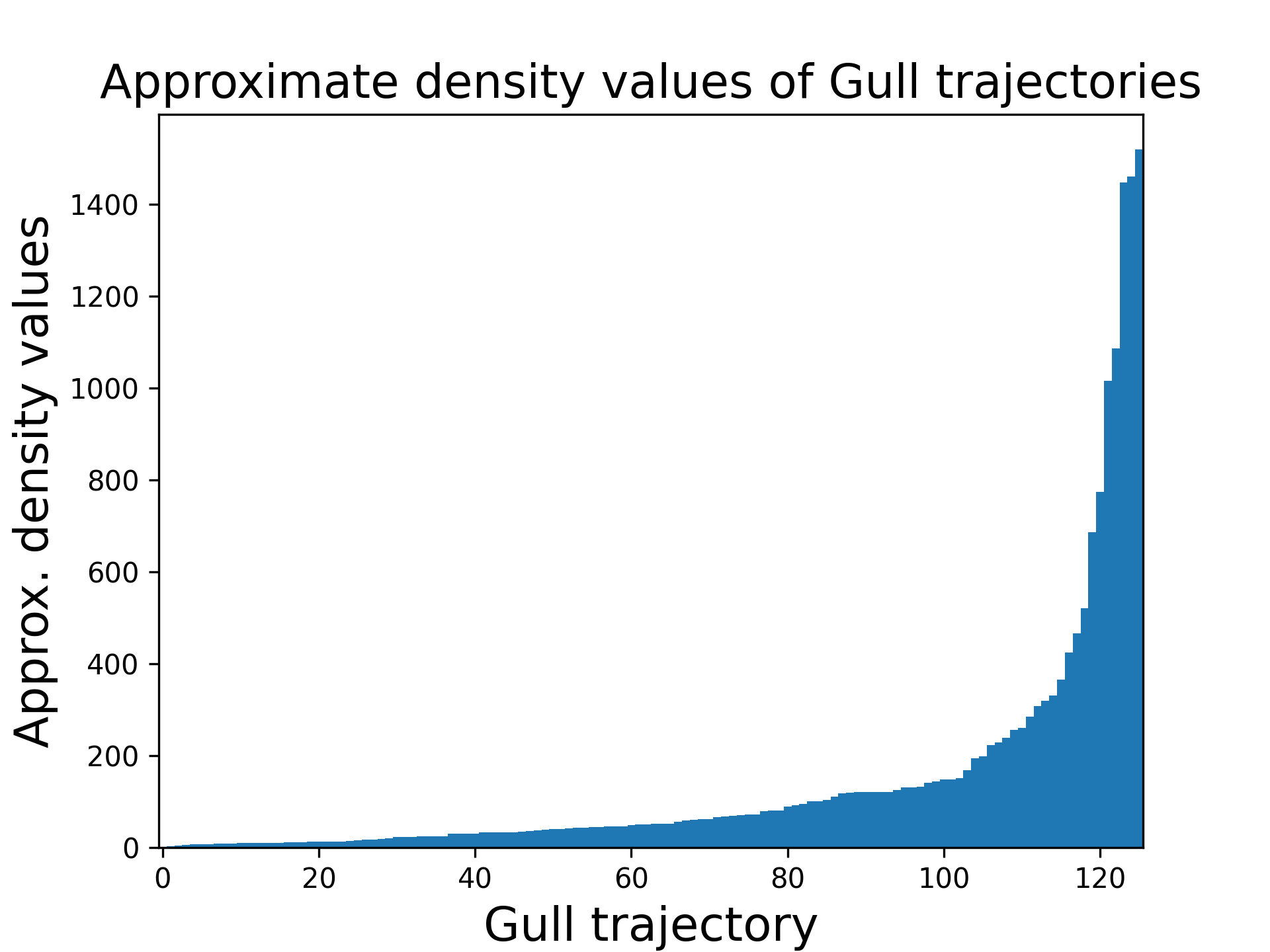} 
    \end{minipage}\hfill
    \begin{minipage}{0.5\textwidth}
        \centering
        \includegraphics[width=\textwidth]{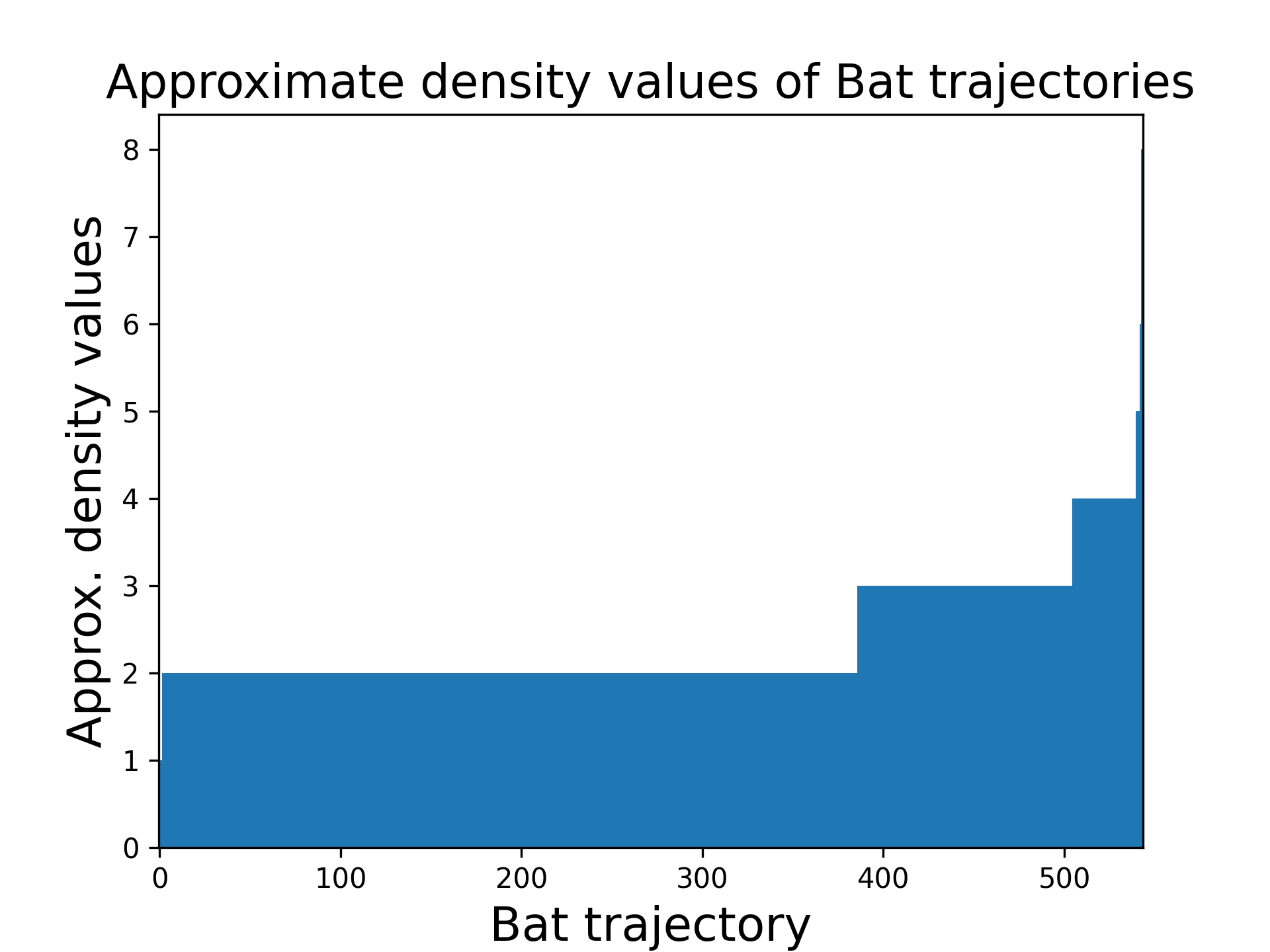} 
    \end{minipage}

\caption{Distributions of approximate density values of trajectories in Pigeon, Seabird, Cat, Buffalo, Gull, and Bat data sets}
\label{fig:distribution2}
\end{figure}

\subsection{Discussion}
In the experiment, we estimated the density values of trajectories in twelve real-world data sets. Our observation is that most of the data sets' estimated density values are low. In eleven data sets (all except Seabird), the median estimate density values are less than $51$. In six data sets, Vessel-Y, Truck, Bus, GeoLife, Pigeon, and Bat, the median $\lambda / n$ ratios are less than $0.04$, where $\lambda$ and $n$ are the estimated density value and the size of the curve. Although the density values are estimates, the notion of low density for trajectories is valuable. The algorithms that perform better when the density value is small, e.g. the algorithm by Driemel et al.~\cite{driemel2012}, can be applied to these data sets to improve efficiency. 


\end{document}